\tikzstyle{bag} = [align=center]
\tikzset{snake it/.style={decorate, decoration=snake}}
 \newcommand{\badat}{\begin{alignedat}}
 \newcommand{\eadat}{\end{alignedat}}
 \def\be{\begin{equation}}
\def\ee{\end{equation}}
\newcommand{\pink}[1]{\textcolor{\pink}{#1}}
\definecolor{dblue}{rgb}{0.2,0.50,0.80}
\tikzset{snake it/.style={decorate, decoration=snake}}
\theoremstyle{definition}
\newtheorem{defn}{Definition}
\newtheorem{prop}{Proposition}
\newtheorem{thm}{Theorem}
\numberwithin{equation}{section} 
\pgfplotsset{compat=1.17} 
\begin{document}

\begin{titlepage}

\unitlength = 1mm
\ \\
\vskip 2cm
\begin{center}
{\LARGE{\textsc{Two Roads to Fortuity in ABJM Theory}}}

\vspace{0.8cm}
Connor Behan$^{1,2}$ and Leonardo Pipolo de Gioia$^2$
\vspace{0.7cm}\\
\small{${}^{1}$\textit{Perimeter Institute for Theoretical Physics, 31 Caroline St. N, N2L 2Y5,  Waterloo, Canada}
\vspace{5pt}\\
${}^2$\textit{ICTP South American Institute for Fundamental Research\\IFT-UNESP, S\~{a}o Paulo, SP Brazil 01440-070}} 
\vspace{20pt}

\begin{abstract}
A recently proposed addition to the holographic dictionary connects extremal black holes to fortuitous operators --- those which are only supersymmetric for sufficiently small values of the central charge. The most efficient techniques for finding them come from studying the cohomology of a nilpotent supercharge. We explore two aspects of this problem in weakly-coupled ABJM theory, where the gauge group is $\mathrm{U}(N) \times \mathrm{U}(N)$ and the Chern-Simons level is taken to be large. Adapting an algorithm which has been used to great effect in $\mathcal{N} = 4$ Super Yang-Mills, we enumerate 244 low-lying fortuitous operators and sort them into multiplets of the centralizer algebra. This leads to the construction of two leading fortuitous representatives for $N = 3$ which are subleading for $N = 2$. In the second part of this work, we identify a truncation of ABJM theory where the action of the one-loop supercharge matches the one in the BMN subsector of $\mathcal{N} = 4$ Super Yang-Mills. This allows a known infinite tower of representatives to be lifted from one theory to the other.
\end{abstract}
\vspace{0.5cm}
\end{center}
\vspace{0.8cm}

\end{titlepage}
\tableofcontents

\section{Introduction}

A lasting impact of string theory is its ability to account for the microstates of black holes \cite{bmmw23}. The first demonstration of this came when Strominger and Vafa reproduced the Bekenstein-Hawking entropy formula for a family of supersymmetric black holes in a compactification of type IIA string theory down to five dimensions \cite{hep-th/9601029}. This was done by going to weak coupling and performing a T-duality transformation so that the black hole microstates become part of the spectrum of a two-dimensional conformal field theory (CFT). Specifically, this was the worldvolume theory on a stack of D1-branes which were part of a bound state also involving D3's and D5's \cite{hep-th/9601029}. States were counted in this CFT by estimating the elliptic genus
\begin{align}
I(q, y) = \text{Tr}_{\text{RR}} \left [ (-1)^F q^{L_0 - c/24} \bar{q}^{\bar{L}_0 - c/24} y^{J^{(z)}_0} \right ]. \label{elliptic-genus}
\end{align}
%where $q = e^{2\pi i \tau}$ and $y = e^{2\pi i z}$.
Since the trace is in the Ramond-Ramond sector, the supercurrent has a zero mode $\bar{G}_0$ which commutes with $\bar{L}_0$ but changes the fermion number $F$. As a result, only dilation eigenstates annihilated by $\bar{L}_0$ (known as $\frac{1}{2}$-BPS states in this context) can survive. On the other hand, R-symmetry acts non-trivially on the supercurrent so the presence of $J^{(z)}_0$ allows $L_0$ to take any value. The crucial property which makes \eqref{elliptic-genus} an index is that there is also cancellation between $\frac{1}{2}$-BPS states which have the right quantum numbers to combine into long multiplets and lift above the BPS bound. One can therefore be sure that \eqref{elliptic-genus} stays protected over the full moduli space of a CFT.

The AdS/CFT correspondence \cite{hep-th/9711200} has allowed checks of this sort to be done in several other cases. Interestingly, the counting of black hole microstates in the most famous version of the correspondence --- the duality between $\mathcal{N} = 4$ Super Yang-Mills (SYM) and type IIB string theory in $AdS_5 \times S^5$ --- followed a somewhat circuitous route. Although $\frac{1}{16}$-BPS (annihilated by one Poincar\'{e} supercharge) black hole solutions of 5d gauged supergravity were first found in \cite{hep-th/0401042,hep-th/0401129}, an early attempt to recover their entropy from a superconformal index calculation in \cite{kmmr07} failed due to large cancellations between bosonic and fermionic states. What finally led to a match was the realization that the index should be evaluated in the neighbourhood of complex saddle points \cite{1705.05383,1810.11442,1810.12067,1812.09613}. One of the works which inspired this renewed look at $\mathcal{N} = 4$ SYM was a study of a different (and less supersymmetric) holographic CFT which can be described as a topologically twisted version of ABJM theory \cite{0806.1218}. As shown in \cite{1511.04085,1608.07294}, its superconformal index captures the entropy of black holes which live in magnetic $AdS_4$ and are annihilated by two supercharges. As far as we are aware, a similar study of ordinary ABJM theory has not been done because the analogous supergravity solutions for $AdS_4 \times \mathbb{CP}^3$ are not known. This paper will approach the study of black holes described by ABJM theory from a different perspective.

To do so, we will
%use the AdS/CFT correspondence to develop bulk intuition about
focus on
explicit states constructed using the field theory at weak coupling. To have some hope of also learning about strong coupling, where stringy effects subside, one can further restrict to states preserving at least some supersymmetry (SUSY). Specifically, consider $\left | \psi \right >$ which is annihilated by a nilpotent supercharge $Q$ and its BPZ-conjugate $Q^\dagger$. In addition to being $Q$-closed by definition, it is also straightforward to argue that $\left | \psi \right >$ should not be $Q$-exact. If it were, we would be able to write
\begin{align}
\left | \psi \right > = Q \left | \psi' \right > \Longrightarrow Q^\dagger Q \left | \psi' \right > = 0
\end{align}
and find that $\left | \psi \right >$ would need to have a vanishing norm. This suggests that what we are really looking at is the \textit{cohomology} of $Q$. Enumerating this cohomology can reveal geometric information as shown by \cite{0803.4183} for $\mathcal{N} = 4$ SYM with gauge groups of low rank. In particular, one can look for cohomology classes which are not dual to a gas of gravitons in the bulk. A computerized search in \cite{1305.6314} failed to find examples for ${\rm SU}(2)$ and ${\rm SU}(3)$, leading to a possible reason to doubt their existence. After many years however, this conclusion was overturned in \cite{2209.06728} which used a more efficient enumeration to compute the full and multi-graviton partition functions at high enough levels for a discrepancy to appear. The associated operator is annihilated by $Q$ only as a result of trace relations \cite{2209.12696} and has therefore been termed fortuitous. The physical picture corresponding to this is that gravity must be sufficiently strong for all black holes, including BPS ones, to form. Sending Newton's constant to zero should therefore cause the dual operators to lift above the BPS bound \cite{bmv23}.

More generally, it has now been proposed that all cohomology classes in a holographic CFT should be thought of as either monotone or fortuitous --- precise definitions are given in \cite{cl24}, which we will soon review. The monotone classes include coherent states of gravitons while the fortuitous classes include black holes, but it is sometimes not obvious where more exotic geometries belong. Alongside $\mathcal{N} = 4$ SYM, the fortuity mechanism has also been investigated in matrix models \cite{2504.14181,2511.00790}, supersymmetric SYK models \cite{2412.06902}, the D1-D5 CFT \cite{2501.05448,2505.14888,2511.23294} and the Leigh-Strassler CFT \cite{ck25}. For the present paper, we are most interested in 3d gauge theories with a known gravity dual. These include ABJ theory \cite{0807.4924} dual to higher-spin gravity in $AdS_4$ and ABJM theory \cite{0806.1218} dual to type IIA string theory in $AdS_4 \times \mathbb{CP}^3$. For the case of ABJ theory, the authors of \cite{kllo25} constructed some fortuitous states and found that graviton composites were joined by higher-spin composites within the monotone sector. Most recently, the first fortuitous state in weakly-coupled ABJM theory was found in \cite{bsvz25} which has overlap with this work. Our goal is to uncover additional structure in the fortuitous Hilbert space by combining systematic searches at low level with more sporadic searches at high level. In particular, we will exploit a previously unnoticed relation between the enumeration problems in ABJM and $\mathcal{N} = 4$ SYM.

This paper is organized as follows. Section \ref{sec:holographic-coverings} starts with the minimal amount of structure needed to define a holographic CFT possessing a nilpotent supercharge. After reviewing the formalism of \cite{cl24}, we consider trace relations for different types of matrices which have appeared in the literature on fortuitous operators. Section \ref{sec:abjm-setup} then discusses the field content of ABJM theory and studies the privileged sector of fields on which BPS operators are allowed to depend. We point out connections to ${\cal N}=4$ SYM where appropriate, especially regarding recent work on coupling dependence of the supercharge cohomology \cite{cl25a,cl25b,bk25} which has yet to be done for ABJM. Section \ref{sec:abjm-details} contains both of our main results. The first is a level-by-level counting procedure, implemented in a public code, which has allowed us to explicitly construct three fortuitous cohomology classes and identify the charge sectors associated with several more. The second is a map from SYM BPS operators to ABJM BPS operators which we combine with the results of \cite{2304.10155} to construct
-%fortuitous cohomology classes in an infinite family.
-an infinite family of cohomology classes which we conjecture to be fortuitous.
-Section \ref{sec:conclusion} concludes with a summary of these results and potential future directions.

\section{Holographic Coverings and Black Hole States}\label{sec:holographic-coverings}

In this section, we are going to review the classification of the BPS operators proposed in \cite{cl24}, which separates them into two classes: the \textit{monotone operators}, which are presumably dual to smooth horizonless geometries, and the \textit{fortuitous operators}, which are
%presumably dual to geometries containing
candidates for
black holes. The definitions are cohomological in nature, and are phrased in the language of cochain complexes on abelian categories.

Let ${\cal H}_N$ be the Hilbert space of an SCFT labelled by $N$ in an infinite sequence. Suppose that $Q_N$ is a nilpotent supercharge acting on ${\cal H}_N$ with BPZ adjoint $S_N = Q_N^\dagger$. We further denote $\Delta \equiv 2\{Q_N,S_N\}$.
\begin{defn}
    A state $\psi\in {\cal H}_N$ is called a \textbf{BPS state} if $\Delta \psi =0$ and the subspace of BPS states is denoted ${\cal H}_N^{\rm BPS} \equiv \ker \Delta$.
\end{defn}
In what follows we will provide a detailed account of the classification of BPS states into monotone and fortuitous. This classification starts with the definition of a holographic covering of a sequence of Hilbert spaces.
\begin{defn}\label{defn:covering}
    Given a sequence of vector spaces ${\cal H}_N$ for $N\in \mathbb{N}$, a covering consists of a covering vector space $\tilde{\cal H}$ and a sequence of subspaces $I_N\subset \tilde{\cal H}$ such that ${\cal H}_N\simeq \tilde{\cal H}/I_N$. Equivalently, there is a short exact sequence
    \begin{equation}
        \begin{tikzcd}
            0\arrow[r] & I_N\arrow[r, "\iota_N"] & \tilde{\cal H}\arrow[r, "\pi"] & {\cal H}_N\arrow[r]& 0.
        \end{tikzcd}
    \end{equation}
\end{defn}
\begin{defn}\label{defn:strict-covering}
    Let ${\cal H}_N$ be a sequence of vector spaces for $N\in \mathbb{N}$ and $\left(\tilde{\cal H},(I_N)_{N\in \mathbb{N}}\right)$ a covering. We say that the covering is \textbf{strict} if and only if $I_{N+1}\subset I_N$ and for any $v\in \tilde{\cal H}$ there is $N\in \mathbb{N}$ such that $v\notin I_N$.
\end{defn}
\begin{defn}
    Let ${\cal H}_N$ be a sequence of vector  spaces for $N\in \mathbb{N}$ and $\left(\tilde{\cal H}, (I_N)_{N\in \mathbb{N}}\right)$ a covering. We say that a sequence of operators $T_N \in \operatorname{End}({\cal H}_N)$ lifts to the covering if there exists $\tilde{T}\in \operatorname{End}(\tilde{\cal H})$ such that the following diagram commutes.
    \begin{equation}
        \begin{tikzcd}
            \tilde{\cal H}\arrow[r, "\exists \ \tilde{T}",dashed]\arrow[d,"\pi_N"']&\tilde{\cal H}\arrow[d,"\pi_N"]\\
            {\cal H}_N\arrow[r,"T_N"']&{\cal H}_N
        \end{tikzcd}
    \end{equation}
\end{defn}
\begin{defn}\label{defn:cover-symmetry}
    Let ${\cal H}_N$ be a sequence of vector spaces for $N\in \mathbb{N}$ and $\left(\tilde{\cal H},(I_N)_{N\in \mathbb{N}}\right)$ a covering. Let $\mathfrak{g}$ be a Lie superalgebra and $\rho_N : \mathfrak{g}\to \operatorname{End}({\cal H}_N)$ representations on each ${\cal H}_N$. We say that the sequence of representations lifts to the covering if there exists a Lie superalgebra representation $\tilde{\rho} : \mathfrak{g}\to \operatorname{End}(\tilde{\cal H})$ such that $\tilde{\rho}(X)$ is a lift of the sequence $\rho_N(X)$ for all $X\in \mathfrak{g}$.
\end{defn}

Notice that definitions 1 and 2 are exactly the same as in \cite{cl24}, but our definitions 3 and 4 are meant to adapt and generalize definition 3 in that reference. In particular we consider Lie superalgebras instead of compact Lie groups. Moreover, definition 3 in \cite{cl24} had a superfluous condition that $I_N$ be invariant under the lifted representation. We can easily see, however, that this is implied by commutativity of the above diagram. Definition 1 implies that $I_N \simeq \ker \pi_N$ and as such $v\in I_N$ if and only if $\pi_N(v)=0$. The diagram, on the other hand, demands $\pi_N\circ \tilde{\rho}(X) = \rho_N(X)\circ \pi_N$. Therefore if $v\in I_N$ the right-hand side vanishes and so does the left-hand side, showing that $\tilde{\rho}(X)v \in \ker \pi_N$. As such, $v\in I_N$ implies $\tilde{\rho}(X)v\in I_N$ for all $X\in \mathfrak{g}$. Note that the same argument applies \textit{mutatis mutandis} to group representations.

These definitions are meant to formalize and generalize some standard techniques for supersymmetric gauge theories. These often have fundamental fields which are matrix-valued and primary operators which can be constructed as multi-traces of products of these fields at weak coupling. The strict inclusion $I_N \subset I_{N - 1}$ reflects the fact that trace relations for square matrices of size $N$ also hold for those of size $N - 1$ but not \textit{vice versa}.
%In these theories, we have single trace operators $\operatorname{Tr}[\Phi_1\cdots \Phi_L]$, where each of the $\Phi_i$ is taken from the set of fundamental fields. From the single trace operators we can now take arbitrary products and arbitrary linear combinations. Importantly, if the gauge group is ${\rm SU}(N)$, the $\Phi_i$ are $N\times N$ matrices, but we can abstract that away with the free algebra construction.
Just as we assume $Q_N \in \operatorname{End}({\cal H}_N)$ (and in fact all superconformal generators) exist for all $N$, we will also assume that they lift to the covering space.
%Having this intuition in mind, it should now be clear that the superconformal symmetry algebra lifts to the covering: the action of the generators of the superconformal algebra are independent of the size of the matrices.

\subsection{BPS operators and supercharge cohomology}\label{sec:bps-operators}

Under the assumption that ${\cal H}_N$ admits a covering in the sense of definition \ref{defn:covering} we are going to further assume that there exists a lift of the supercharge, in other words, there exists $\tilde{Q} : \tilde{\cal H}\to \tilde{\cal H}$ such that the following diagram commutes.
\begin{equation}
    \begin{tikzcd}
        \tilde{\cal H}\arrow[d,"\pi_N"']\arrow[r, "\tilde{Q}"] & \tilde{\cal H}\arrow[d,"\pi_N"]\\
        {\cal H}_N \arrow[r,"Q_N"']& {\cal H}_N
    \end{tikzcd}
\end{equation}
The intuition behind this assumption is very clear: the action of a supercharge on the fundamental fields of an SCFT is independent of $N$. In other words: the supercharge ``does not see'' that the fundamental fields are matrices of size $N\times N$. As such it naturally lifts to the space of formal multi-traces.

Now, choose $E_N\in \operatorname{End}({\cal H}_N)$ such that $[E_N,Q_N]=Q_N$ and $E_N^\dagger = E_N$.\footnote{Typically, $E_N$ is a linear combination of Cartans from the maximal bosonic subalgebra of the superconformal algebra.}
The $E_N$ must also have eigenvalues of the form $E_n = E_0 + n$ for $n\in \mathbb{Z}$ or $n\in \mathbb{Z}_{\geq}$. Since $E_N$ is hermitian, ${\cal H}_N$ decomposes into eigenspaces of $E_N$ and if ${\cal H}_N^{(n)}$ is the eigenspace of eigenvalue $n\in \mathbb{Z}$ we get a cochain complex
\begin{equation}
    \begin{tikzcd}
        \cdots \arrow[r,"Q_N"] & {\cal H}^{(n-1)}_N\arrow[r,"Q_N"] & {\cal H}^{(n)}_N \arrow[r,"Q_N"] & {\cal H}^{(n+1)}_N\arrow[r,"Q_N"] & \cdots
    \end{tikzcd}.
\end{equation}
Let $H^n_{Q_N}({\cal H}_N)$ be the $n$-th cohomology space of that complex:
\begin{equation}
    H^n_{Q_N}({\cal H}_N) \equiv \dfrac{\ker \left(Q_N : {\cal H}^{(n)}_N\to {\cal H}^{(n+1)}_N\right)}{\operatorname{im} \left(Q_N : {\cal H}^{(n-1)}_N\to {\cal H}^{(n)}_N\right)}.
\end{equation}
From $H^n_{Q_N}({\cal H}_N)$ we obtain a new graded vector space
\begin{equation}
    H^\ast_{Q_N}({\cal H}_N) = \bigoplus_{n\in \mathbb{Z}}H^n_{Q_N}({\cal H}_N)
\end{equation}
and this space is isomorphic to ${\cal H}_N^{\rm BPS}\subset {\cal H}_N$, the subspace of BPS operators defined to be the ones annihilated by $\Delta \equiv 2\{Q,Q^\dagger\}$. We review the standard argument for this isomorphism in Appendix \ref{app:bps-cohomology-isomorphism}. Elements of the cohomology $H^\ast_{Q_N}({\cal H}_N)$ may be distinguished by how the supercharge acts on them when we lift to the covering space.

\begin{defn}
    Let ${\cal H}_N$ be a sequence of vector spaces for $N\in \mathbb{N}$ and $\left(\tilde{\cal H}, (I_N)_{N\in \mathbb{N}}\right)$ a covering. Let $Q_N\in \operatorname{End}({\cal H}_N)$ be nilpotent admitting a lift $\tilde{Q}\in \operatorname{End}(\tilde{\cal H})$ to the covering. Given $\psi\in {\cal H}_N$ with lift $\tilde{\psi}\in \tilde{\cal H}$, we say that $\psi$ is \textbf{monotone} if $\tilde{Q}\tilde{\psi} =0$ and we say that $\psi$ is \textbf{fortuitous} if $\tilde{Q}\tilde{\psi}\neq 0$ and $\tilde{Q}\tilde{\psi} \in I_N$.
\end{defn}

This is the main definition we are going to employ: monotone states should be multi-graviton states, whereas the default interpretation of a fortuitous state should be a black hole. The quest for black hole states then takes a very precise form: at a given rank $N$, find an operator ${\cal O}$ on ${\cal H}_N$ with a lift $\tilde{\cal O}$ to $\tilde{\cal H}$ such that $\tilde{Q}\tilde{\cal O}\neq 0$ and $\tilde{Q}\tilde{\cal O}\in I_N$.

\subsection{Specializing to traces}

In the literature on fortuity, the formalism above has been specialized in at least two ways. For the D1-D5 CFT studied in \cite{2501.05448,2505.14888,2511.23294}, one uses the $I_N$ to implement restrictions on cycle shapes associated with twisted sectors. In all other cases we are aware of, these ideals encode trace relations on matrices. This paper will be an example of the latter. More precisely, we will let $\tilde{\cal H}$ denote the vector space spanned by products of formal traces (obeying cyclicity) in a set of fundamental fields. We may write
\begin{align}
\mathcal{O} = \sum_{i = 1}^M c_i \prod_{j=1}^{\deg_{\rm MT}(i)} \text{Tr} \left ( \prod_{k=1}^{\deg_{\rm ST}(i,j)} X_{i,j,k} \right )
\end{align}
to express a general element of $\tilde{\cal H}$.\footnote{Explaining the notation, ${\cal O}$ is a linear combination of $M$ multi-traces. The $m$-th multi-trace has multi-trace degree $\deg_{\rm MT}(m)$, which is the number of single-traces being multiplied. The $n$-th single-trace in the $m$-th multi-trace has single-tracee degree $\deg_{\rm ST}(m,n)$, which is the length of the corresponding word.} Taking $I_N$ to be the ideal of trace relations for a particular value of $N$, $\tilde{\cal H}/I_N$ yields a quotient space on which the trace relations hold true. This space is then equivalent to the space ${\cal H}_N$ where the multi-traces are not formal, but rather constructed from \textit{concrete} $N\times N$ matrices.
To close this section, we would like to discuss two bits of extra structure which may arise in this picture. The first is the fact that one sequence ${\cal H}_N$ can sometimes contain information about a \textit{different} sequence ${\cal H}^\prime_N$ even when their coverings are not the same. The second is that some SCFTs require one to consider matrices which are not square.

The first issue occurs when the relations in $(I_N)_{N \in \mathbb{N}}$ and $(I'_N)_{N \in \mathbb{N}}$ apply to matrices which furnish the adjoint representations of Lie algebras in $(\mathfrak{g}_N)_{N \in \mathbb{N}}$ and $(\mathfrak{g}'_N)_{N \in \mathbb{N}}$. If $\mathfrak{g}_N$ is a subalgebra of $\mathfrak{g}'_N$, lifting an element of $H^\ast_{Q_N}({\cal H}_N)$ to $H^\ast_{Q'_N}({\cal H}^\prime_N)$ is well defined. Our main example will be $\mathfrak{u}(N)$ and $\mathfrak{su}(N)$, thought of as gauge algebras. These make it natural for us to define a map by
\begin{align}
f(X) = X - \frac{1}{N} \text{Tr}(X) \textbf{1}. \label{subtract-trace}
\end{align}
For any such map acting on letters, we extend it to act on $\tilde{\cal H}$ by
\begin{align}
F(\mathcal{O}) = \sum_{i = 1}^M c_i \prod_{j=1}^{\deg_{\rm MT}(i)} \text{Tr} \left ( \prod_{k=1}^{\deg_{\rm ST}(i,j)} f(X_{i,j,k}) \right ).
\end{align}
We argue that if the action of $Q$ on letters commutes with $f$, then the action of $Q$ on formal multi-traces commutes with $F$.
\begin{prop}
    Let $f:{\cal X}\to {\cal X}$ be a mapping of letters and let $F:\tilde{\cal H}\to \tilde{\cal H}$ be the associated induced mapping of formal multi-traces on these letters. If $Q\in \operatorname{End}(\tilde{\cal H})$ is a linear derivation induced by its corresponding action on letters, and if $Q\circ f = f\circ Q$ then $Q\circ F = F\circ Q$.
\end{prop}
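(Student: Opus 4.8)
The plan is to exploit that $F$ is an algebra endomorphism of $\tilde{\cal H}$ while $Q$ is a (graded) derivation of it, so that the identity $Q\circ F = F\circ Q$ propagates from a set of algebra generators to all of $\tilde{\cal H}$. Recall that $\tilde{\cal H}$ is the commutative algebra spanned by multi-traces $\prod_j \text{Tr}(w_j)$ in words $w_j$ over the letters ${\cal X}$ (subject to cyclicity), generated as an algebra by the single traces $\text{Tr}(w)$. By linearity it suffices to verify $Q(F({\cal O})) = F(Q({\cal O}))$ on such monomials.

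The first step is to isolate the elementary principle that carries the proof: if $A$ is an associative algebra, $\Phi\in\operatorname{End}(A)$ a grading-preserving algebra homomorphism, $D\in\operatorname{End}(A)$ a derivation, and $D\circ\Phi = \Phi\circ D$ holds on a set of algebra generators of $A$, then it holds on all of $A$. Indeed $\{a\in A : D(\Phi a)=\Phi(Da)\}$ is visibly a linear subspace, and it is closed under products because for $a,b$ in it
\[
D\big(\Phi(ab)\big) = \Phi(Da)\,\Phi(b) \pm \Phi(a)\,\Phi(Db) = \Phi\big(Da\,b \pm a\,Db\big),
\]
the first equality being the Leibniz rule for $D$ followed by the hypothesis applied to $a$ and $b$, the second being the homomorphism property of $\Phi$ followed by the Leibniz rule in reverse, and the sign $(-1)^{|a|}$ being unchanged under $\Phi$ precisely because $\Phi$ preserves the grading. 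So this subspace is a subalgebra, and containing the generators it must be all of $A$.

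I then apply this twice. First, extend $f$ to the induced algebra homomorphism of the free associative algebra $\mathbb{F}\langle{\cal X}\rangle$ on the letters, and observe that $Q$, acting by the Leibniz rule on the word inside each trace, restricts to a derivation of $\mathbb{F}\langle{\cal X}\rangle$ induced by its action $Q:{\cal X}\to\mathbb{F}\langle{\cal X}\rangle$ on letters. The hypothesis $Q\circ f=f\circ Q$ on ${\cal X}$ is exactly agreement on generators, so the principle yields $Q\circ f = f\circ Q$ on all of $\mathbb{F}\langle{\cal X}\rangle$. Since the formal-trace map $\text{Tr}:\mathbb{F}\langle{\cal X}\rangle\to\tilde{\cal H}$ intertwines $f$ with $F$ and $Q$ on words with $Q$ on single traces by construction, this already gives $Q\circ F = F\circ Q$ on every single trace $\text{Tr}(w)$. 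A second application of the principle, now inside $\tilde{\cal H}$ itself — where $F$ is a homomorphism, $Q$ a derivation, and the single traces generate — upgrades the identity to arbitrary multi-traces, hence to all of $\tilde{\cal H}$.

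The only genuinely delicate points are the well-definedness checks modulo cyclicity and, if $Q$ is odd, the sign bookkeeping. For cyclicity one needs $F(\text{Tr}(w_1w_2)) = F(\text{Tr}(w_2w_1))$, which is immediate since $f$ is a word homomorphism, and $Q(\text{Tr}(w_1w_2)) = Q(\text{Tr}(w_2w_1))$, which follows from the Leibniz rule together with cyclicity of $\text{Tr}$ inside $\tilde{\cal H}$. For the signs, all that is required is that $f$ preserve whatever $\mathbb{Z}$- or $\mathbb{Z}_2$-grading the fields carry; this holds for the trace-subtraction map $f(X)=X-\tfrac{1}{N}\text{Tr}(X)\textbf{1}$ and for any letter map used in practice, and it is exactly the hypothesis making the sign step above legitimate. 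Everything else is the routine induction on word length that the principle packages away.
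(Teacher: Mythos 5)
Your proof is correct and follows essentially the same route as the paper's: verify the identity on single traces using the letter-level hypothesis, then propagate it to products via the Leibniz rule for $Q$ and the multiplicativity of $F$, which is exactly what your ``derivation commuting with a homomorphism on generators'' principle packages. The only differences are cosmetic --- you state the induction step as an abstract lemma and add well-definedness checks for cyclicity and fermionic signs that the paper leaves implicit.
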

\begin{proof}
    The proof is done in steps. First consider just single traces. In that case we have
    \begin{equation}
        Q \text{Tr}(a_1\cdots a_n) = \sum_{i=1}^n \text{Tr}(a_1\cdots Qa_i\cdots a_n).
    \end{equation}
    We now act with $F$ as
    \begin{equation}
        \begin{split}
	    F[Q\text{Tr}(a_1\cdots a_n)] &= \sum_{i=1}^n \text{Tr}(f(a_1)\cdots f(Qa_i)\cdots f(a_n)) \\
            &= \sum_{i=1}^n \text{Tr}(f(a_1)\cdots Qf(a_i)\cdots a_n)\\
            &= Q \text{Tr}(f(a_1)\cdots f(a_n))\\
            &= Q F[\text{Tr}(a_1\cdots a_n)],
        \end{split}
    \end{equation}
    where we have used the assumption that $f\circ Q = Q\circ f$ on letters.
    As a result, $F\circ Q = Q\circ F$ on single traces. Now consider a product of two multi-traces. The $Q$ action gives
    \begin{equation}
        Q[\text{Tr}(a_1\cdots a_n)\text{Tr}(b_1\cdots b_n)]=[Q\text{Tr}(a_1\cdots a_n)]\text{Tr}(b_1\cdots b_n)+\text{Tr}(a_1\cdots a_n)[Q\text{Tr}(b_1\cdots b_n)]
    \end{equation}
    and the $F$ action is
    \begin{equation}
        \begin{split}
            FQ[\text{Tr}(a_1\cdots a_n)\text{Tr}(b_1\cdots b_n)] &= [FQ\text{Tr}(a_1\cdots a_n)]F\text{Tr}(b_1\cdots b_n)+F\text{Tr}(a_1\cdots a_n)[FQ\text{Tr}(b_1\cdots b_n)] \\
            &=[QF\text{Tr}(a_1\cdots a_n)]F\text{Tr}(b_1\cdots b_n)+F\text{Tr}(a_1\cdots a_n)[QF\text{Tr}(b_1\cdots b_n)]\\
            &=Q [F\text{Tr}(a_1\cdots a_n)F\text{Tr}(b_1\cdots b_n)]\\
            &= Q F [\text{Tr}(a_1\cdots a_n)\text{Tr}(b_1\cdots b_n)].
        \end{split}
    \end{equation}
    where we have invoked commutativity at the level of the single traces.
    This shows that commutativity also holds for products of single traces. Since $F$ is linear and multiplicative, and since $\tilde{\cal H}$ is the free non-commutative algebra generated by single traces, it follows that $Q\circ F = F\circ Q$.
\end{proof}
We can now draw a useful conclusion which is specific to the map which projects onto the traceless part of a letter.
\begin{prop}
    Let $f$ act on letters by \eqref{subtract-trace}. Then $F(I_{\rm SU(N)})\subset I_{\rm U(N)}$ where $I_{\rm SU(N)}$ and $I_{\rm U(N)}$ are ideals of trace relations for the indicated groups.
\end{prop}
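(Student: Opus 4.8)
The plan is to make the two ideals concrete and then exploit the fact that $F$ is nothing but the operation of substituting the letter map $f$ into every slot of a formal multi-trace. Fix $N$, and for each assignment of matrices $M_a\in\operatorname{End}(\mathbb{C}^N)$ to the letters $X_a$ let $\mathrm{ev}_{\{M_a\}}:\tilde{\cal H}\to\mathbb{C}$ be the concretization map sending each $X_a$ to $M_a$, each $\textbf{1}$ to the identity matrix, and each formal $\text{Tr}$ to the matrix trace (this is well defined since cyclicity of formal traces matches cyclicity of the matrix trace). Unwinding the definition of the ideal of trace relations, $\mathcal{O}\in I_{\rm SU(N)}$ iff $\mathrm{ev}_{\{M_a\}}(\mathcal{O})=0$ for \emph{every} traceless assignment, whereas $\mathcal{O}\in I_{\rm U(N)}$ iff this holds for \emph{every} $M_a\in\operatorname{End}(\mathbb{C}^N)$.

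The single ingredient is the substitution identity
\begin{equation}
  \mathrm{ev}_{\{M_a\}}\big(F(\mathcal{O})\big)=\mathrm{ev}_{\{f(M_a)\}}(\mathcal{O}),\qquad f(M)=M-\tfrac{1}{N}\text{Tr}(M)\textbf{1},
\end{equation}
valid for all $\mathcal{O}\in\tilde{\cal H}$ and all matrix assignments. On a single-trace generator $\text{Tr}(X_{a_1}\cdots X_{a_n})$ both sides equal $\text{Tr}\big(f(M_{a_1})\cdots f(M_{a_n})\big)$ straight from the definitions; since $\mathrm{ev}_{\{M_a\}}\circ F$ and $\mathcal{O}\mapsto\mathrm{ev}_{\{f(M_a)\}}(\mathcal{O})$ are both algebra homomorphisms out of the free commutative algebra $\tilde{\cal H}$ generated by single traces, agreement on generators gives the identity in general --- this is precisely the ``single traces, then products'' structure of the proof of Proposition 1. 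Granting it, the proposition is immediate: if $\mathcal{O}\in I_{\rm SU(N)}$ then each $f(M_a)=M_a-\tfrac1N\text{Tr}(M_a)\textbf{1}$ is traceless, so the right-hand side vanishes, hence $\mathrm{ev}_{\{M_a\}}(F(\mathcal{O}))=0$ for \emph{all} $\{M_a\}$, i.e.\ $F(\mathcal{O})\in I_{\rm U(N)}$.

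The only point needing care --- and the main, if mild, obstacle --- is making the displayed identity watertight at the formal level. First one must confirm that $F(\mathcal{O})$ genuinely lies in $\tilde{\cal H}$: expanding each $f(X_{i,j,k})$ produces, besides the original letters, insertions of $\textbf{1}$ inside traces, scalar prefactors $-\tfrac1N\text{Tr}(X_{i,j,k})$, and a factor $\text{Tr}(\textbf{1})$ whenever a whole word collapses to the identity. Fixing $N$ and imposing $\text{Tr}(\textbf{1})=N$ (equivalently, adjoining $\textbf{1}$ as a letter carrying this relation) keeps everything inside the formal multi-trace algebra, lets the stray $\textbf{1}$'s be deleted from nonempty words, and makes the maps $\mathrm{ev}_{\{M_a\}}$ well defined on $\operatorname{im}F$. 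A more algebraic alternative avoids evaluation altogether: $I_{\rm SU(N)}$ is generated over $I_{\rm U(N)}$ by the degree-one traces $\text{Tr}(X_a)$ --- the traceless locus being cut out exactly by $\text{Tr}(X_a)=0$ --- and $F(\text{Tr}(X_a))=\text{Tr}(f(X_a))=\text{Tr}(X_a)-\tfrac1N\text{Tr}(X_a)\text{Tr}(\textbf{1})=0$, which one then propagates through sums and products using the linearity and multiplicativity of $F$ from Proposition 1; the evaluation argument is cleaner, though, because it sidesteps exhibiting such a generating set.
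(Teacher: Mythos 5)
Your proof is correct, and your primary route is genuinely different from the paper's. The paper argues via the decomposition $I_{\rm SU(N)} = I_{\rm U(N)} + J$ of \eqref{ideal-sum} and checks separately that $F$ sends each summand into $I_{\rm U(N)}$: the $I_{\rm U(N)}$ piece because $f$ of an arbitrary matrix is again an arbitrary matrix, and the $J$ piece because $F(\text{Tr}(X)) = \text{Tr}(X)\left[1 - \tfrac{1}{N}\text{Tr}(\textbf{1})\right]$ vanishes once $\text{Tr}(\textbf{1}) = N$. You instead prove the substitution identity $\mathrm{ev}_{\{M_a\}}\circ F = \mathrm{ev}_{\{f(M_a)\}}$ and observe that $f(M)$ is always traceless, so vanishing on the traceless locus pushes forward to vanishing everywhere. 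What your route buys is that it sidesteps the decomposition \eqref{ideal-sum} entirely --- that decomposition is the one step the paper asserts without justification (it amounts to saying the vanishing ideal of the traceless locus is generated by the ${\rm U}(N)$ relations together with the linear equations $\text{Tr}(X_a)=0$), whereas your identity needs only the definition of the ideals as kernels of evaluation, and it subsumes the paper's claim $F(I_{\rm U(N)})\subset I_{\rm U(N)}$ as a special case; incidentally, your own identity furnishes the missing justification, since $\mathcal{O} = F(\mathcal{O}) + (\mathcal{O} - F(\mathcal{O}))$ exhibits any ${\rm SU}(N)$ relation as an element of $I_{\rm U(N)} + J$. What the paper's route buys is brevity and no need to worry about what evaluation means formally (your caveats about adjoining $\textbf{1}$ with $\text{Tr}(\textbf{1})=N$ are the right ones; the only further pedantry one could add is that for fermionic letters the evaluation should land in a Grassmann algebra rather than $\mathbb{C}$, which affects neither argument). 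Your closing ``algebraic alternative'' is essentially the paper's proof.
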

\begin{proof}
    The difference between ${\rm SU}(N)$ and ${\rm U}(N)$ trace relations is just that ${\rm SU}(N)$ has the additional zero trace condition. This means that at the level of ideals
    \begin{equation}
        I_{\rm SU(N)} = I_{\rm U(N)} + J,\quad J =\langle \text{Tr}(X) : X \in{\cal X} \rangle \label{ideal-sum}
    \end{equation}
    where again ${\cal X}$ is the set of letters. To prove the inclusion we can show that $F(I_{\rm U(N)})\subset I_{\rm U(N)}$ and that $F(J)\subset I_{\rm U(N)} $. The first inclusion is easy to understand: an element $X \in I_{\rm U(N)}$ is a formal multi-trace that vanishes when letters are replaced by arbitrary ${\rm U}(N)$ matrices. At this level, $F$ acts on each letter by subtracting the trace and therefore produces a ${\rm U}(N)$ matrix once again. As such, the expression $F(X)$ still vanishes for ${\rm U}(N)$ matrices and the ideal is preserved. As for $J$, it has the single generator $\text{Tr}(X)$, and we can see that
    \begin{equation}
        F(\text{Tr}(X)) = \text{Tr}(X) \left [ 1 - \frac{1}{N} \text{Tr}(\textbf{1}) \right ].
    \end{equation}
    The trace of the identity gives $N$ in the quotient by $I_{\rm U(N)}$. As such $F(J)\subset I_{\rm U(N)}$ and altogether we showed that $F(I_{\rm SU(N)})\subset I_{\rm U(N)}$.
\end{proof}
Definition 6 makes it clear that a fortuitous operator in a ${\rm U(N)}$ gauge theory is necessarily fortuitous in the corresponding ${\rm SU(N)}$ gauge theory as well. This is because the latter has a strictly larger set of trace relations, \textit{i.e.} $I_{\rm U(N)} \subset I_{\rm SU(N)}$. Proposition 2 is effectively a converse --- a given ${\rm SU}(N)$ fortuitous operator may be lifted to a ${\rm U(N)}$ one straightforwardly. An example for $\mathcal{N} = 4$ SYM with ${\rm SU}(2)$ gauge group is the black hole candidate first detected by the enumeration in \cite{2209.06728}. The expression for it found in \cite{2209.12696} is
\begin{align}
%\mathcal{O} = \text{Tr}(\phi^1 \psi_1 - \phi^2 \psi_2) \text{Tr}(\phi^1 \psi_3) \text{Tr}(\psi_2 \psi_1 \psi_1) + \text{cyclic}
\mathcal{O} = \epsilon^{m_1 m_2 m_3} \text{Tr}(\phi^{m_4} \psi_{m_1}) \text{Tr}(\phi^{m_5} \psi_{m_2}) \text{Tr}(\psi_{m_3} [\psi_{m_4}, \psi_{m_5}]) \label{sym-fort}
\end{align}
in terms of three bosonic and three fermionic BPS letters.
%whose labels are permuted together.
Using the supercharge action which will be given in the next section, we have verified that $\mathcal{O}$ and $F(\mathcal{O})$ are both fortuitous. The fact that ${\rm U}(N)$ versus ${\rm SU}(N)$ is innocuous in $\mathcal{N} = 4$ SYM fortuity will play an important role for us later on.

The second issue worth mentioning deals with matter transforming in the bifundamental of two gauge groups. This is important for the so called ABJ triality \cite{1207.4485,2103.01969} which defines different scaling limits for a Chern-Simons matter theory with gauge group ${\rm U}(N)_k \times {\rm U}(M + N)_{-k}$. Our main interest is ABJM theory with $M = 0$ but it is also possible to fix $N$ while scaling $M$. This leads to a covering space $\tilde{\cal H}$ where the formal multi-traces obey more relations than just cyclicity and the ideal denoted by $J$ in \eqref{ideal-sum}. This exotic covering is important for interpreting the results of \cite{kllo25} which considered the ABJ theory with ${\rm U}(N)_k \times {\rm U}(1)_{-k}$.

As an example, we can consider the lowest-lying fortuitous operators in $N = 2$ ABJM theory. These were found in \cite{bsvz25} and will also be constructed later in this paper. In our preferred notation, these four-fold degenerate operators may be expressed as
\begin{align}
\mathcal{O}^a_b = \epsilon^{cd} \text{Tr}(\bar{\psi}_c \phi_d \bar{\phi}^a \phi_b) - \epsilon_{cd} \text{Tr}(\psi^c \bar{\phi}^d \phi_b \bar{\phi}^a) - \alpha \text{Tr}(\bar{\phi}^a \phi_b) \left [ \epsilon^{cd} \text{Tr}(\bar{\psi}_c \phi_d) - \epsilon_{cd} \text{Tr}(\psi^c \bar{\phi}^d) \right ] \label{abjm-fort}
\end{align}
for a properly chosen coefficient $\alpha$. It is easy to read off quantum numbers and find that this operator has $E + J = 3$.\footnote{As we will see, the nilpotent supercharge preserves the sum of energy and angular momentum eigenvalues, denoted $E$ and $J$ respectively.} Naively, this conflicts with the statement in \cite{kllo25} that the threshold fortuitous operator in $N = 2$ ABJ theory has $E + J = 8$. After all, the trace relations in ${\rm U}(2)_k \times {\rm U}(2)_{-k}$ are a subset of those in ${\rm U}(2)_k \times {\rm U}(1)_{-k}$. The resolution comes from the fact that the extra relations in question survive the $N \to \infty$ limit and thus change the definition of fortuity. The two possibilities are as follows.
\begin{enumerate}
\item If the matter fields are $N \times N$ matrices for unspecified $N$, one can act on \eqref{abjm-fort} with $\tilde{Q}$ to find an expression in terms of formal traces. This expression never belongs to $I_{N\times N}$ for $N \geq 3$ but it belongs to $I_{2\times 2}$ when $\alpha = \frac{3}{4}$. As a result it also belongs to $I_{N\times 1}$ for all $N\geq 2$.
\item If the matter fields are $N \times 1$ instead, the $\tilde{Q}$ action on \eqref{abjm-fort} belongs to $I_{N\times 1}$ for all $N$ when $\alpha = 1$.
\end{enumerate}
This constrained covering also explains the observation in \cite{kllo25} that ABJ theory's gravitons (defined as $Q$-closed single traces) do not behave as an ideal gas. When the supercharge action preserves the number of traces, the theory of cyclic cohomology \cite{2306.01039} guarantees that composites of graviton operators span the monotone Hilbert space. This generalizes earlier work by \cite{0712.2714,1305.6314} which found that the BPS and multi-graviton Hilbert spaces agreed at infinite $N$ in $\mathcal{N} = 4$ SYM. In ABJ theory, the trace preserving assumption fails and non-graviton degrees of freedom need to be considered as well.

\section{BPS States at Weak Coupling}\label{sec:abjm-setup}

Having introduced the holographic covering formalism, it is time to move onto a concrete application --- identifying fortuitous operators in 3d $\mathcal{N} = 6$ ABJM theory. To do so, we will construct cohomology classes out of general traces and $Q$-closed traces in order to compare dimensionalities. This section will explain the setup which goes into this. Structural similarities between this problem and the now standard one in $\mathcal{N} = 4$ SYM will also be reviewed in order to open up another line of attack. Even though the work of \cite{2209.06728} shows that the leading fortuitous operator \eqref{abjm-fort} in ${\rm U}(2)_k \times {\rm U}(2)_{-k}$ ABJM does not have an SYM counterpart, the converse need not be true. It is plausible that one can find an ABJM counterpart to the leading fortuitous operator \eqref{sym-fort} in ${\rm U}(2)$ SYM.

\subsection{Basics of ABJM theory}

ABJM theory is a Chern-Simons matter theory with gauge group ${\rm U}(N)_k \times {\rm U}(N)_{-k}$ where $k$ is the Chern-Simons level. It was first proposed in \cite{0806.1218} as a holographic description of a stack of $N$ M2 branes. The theory exhibits ${\cal N}=6$ superconformal symmetry which is non-trivially enhanced to ${\cal N}=8$ when $k=1,2$. The theory is conjectured to be dual to M theory on $AdS_4\times S^7/\mathbb{Z}_k$, while in the 't Hooft limit $N,k\to \infty$ with $\lambda = \frac{N}{k}$ fixed, the theory is dual to type IIA superstring theory on $AdS_4\times \mathbb{CP}^3$.

The matter content of ABJM theory consists of bosons $\phi_A, \bar{\phi}^A$ and fermions $\psi^A, \bar{\psi}_A$. With respect to gauge symmetry, a bar denotes a bifundamental in $(\Box, \bar{\Box})$ and no bar denotes a bifundamental in $(\bar{\Box}, \Box)$. With respect to R-symmetry, a lower $A = 1, \dots, 4$ is for spinors and an upper $A = 1, \dots, 4$ is for conjugate spinors. We also have the gauge fields $A\bar A_\mu$ and $\bar A A_\mu$. These are often just called $A_\mu$ and $\bar{A}_\mu$ respectively but we prefer the verbose notation since it is easy to remember that barred and unbarred quantities must alternate inside a trace. As can be checked in \cite{0806.4977}, the action describing how these ingredients are coupled together is
\begin{align}
S = \frac{k}{2\pi} \int d^3x \, \text{Tr} (L_{\text{kin}} + L_{\text{CS}} + L_{\text{pot}})
\end{align}
where
\begin{align}
L_{\text{kin}} &= D_\mu \bar{\phi}^A D^\mu \phi_A - i \bar{\psi}_A \gamma^\mu D_\mu \psi^A \\
L_{\text{CS}} &= \epsilon^{\mu\nu\rho} \left [ \frac{1}{2} A\bar{A}_\mu \partial_\nu A\bar{A}_\rho + \frac{1}{3} A\bar{A}_\mu A\bar{A}_\nu A\bar{A}_\rho - (A\bar{A} \leftrightarrow \bar{A}A) \right ] \nonumber \\
L_{\text{pot}} &= i\epsilon^{ABCD} \phi_A \bar{\psi}_B \phi_C \bar{\psi}_D - i\epsilon_{ABCD} \bar{\phi}^A \psi^B \bar{\phi}^C \psi^D \nonumber \\
&+ \bar{\phi}^A \phi_A \bar{\psi}_B \psi^B - \phi_A \bar{\phi}^A \psi^B \bar{\psi}_B + 2\bar{\phi}^A \psi^B \bar{\psi}_A \phi_B - 2\phi_A \bar{\psi}_B \psi^A \bar{\phi}^B \nonumber \\
&- \frac{1}{3} (\phi_A \bar{\phi}^A)^3 - \frac{1}{3} (\bar{\phi}^A \phi_A)^3 - \frac{4}{3} \phi_A \bar{\phi}^C \phi_B \bar{\phi}^A \phi_C \bar{\phi}^B + 2\phi_A \bar{\phi}^A \phi_B \bar{\phi}^C \phi_C \bar{\phi}^B. \nonumber
\end{align}
The covariant derivatives are defined to act as
\begin{align}
D_\mu \zeta &= \partial_\mu \zeta + i(A\bar{A}_\mu \zeta - \zeta \bar{A}A_\mu) \nonumber \\
D_\mu \bar{\zeta} &= \partial_\mu \bar{\zeta} + i(\bar{A}A_\mu \bar{\zeta} - \bar{\zeta} A\bar{A}_\mu). \label{d-action}
\end{align}
The cohomology problem we will study first becomes interesting at one loop. For this reason, we will not be allowed to trade covariant derivatives for ordinary derivatives. More precisely, the terms in the Lagrangian which matter at this order are the ones up to $O(k^{-1})$ when all fields are rescaled by $k^{-1/2}$.

It is worth noting that 3d gauge theories also include local operators which cannot be built out of elementary fields in the Lagrangian. These operators (which are crucial for establishing the SUSY enhancement mentioned above \cite{1007.4861}) are called \textit{monopole operators} because they carry topological charge. Monopole operators would greatly complicate constructive explorations of the Hilbert space at strong coupling but it is easy to argue that they are asymptotically heavy at weak coupling. As reviewed in \cite{2110.01335}, the singular field configurations creating bare monopoles have gauge charges which are quantized in units of $k$. To restore gauge invariance, one must dress them with an $O(k)$ number of matter fields.\footnote{One can also see that these operators are heavy in the large-$k$ limit from the geometric perspective. Treating $S^7$ in the M theory background as a circle fibration of $\mathbb{CP}^3$ in the type IIA background, monopoles are dual to Kaluza-Klein modes on the fiber circle which has radius $k^{-1}$.} We will work at weak coupling and therefore neglect monopole operators from now on. See however \cite{bsvz25} for insight into how they affect the fortuity problem.\footnote{Without constructive enumeration, fortuity can also be studied at the level of the strongly-coupled index (which differs from the weakly-coupled one because of monopole operators). See \cite{hlt25} for recent progress in a family of theories which includes the $k = 1$ case of ABJM.}

The calculations that follow will analyze ABJM theory using the superconformal algebra $\mathfrak{osp}(\mathcal{N}|4)$. Our conventions for it are provided in Appendix \ref{app:superconformal-algebra}. In particular, our supercharges $Q_{\alpha r}$ and $S^{\alpha}_{\phantom\alpha r}$ are in the vector representation of $\mathfrak{so}({\cal N})$. This is to be contrasted with \cite{kllo25,bsvz25} which use the R-symmetry algebra $\mathfrak{su}(4)\simeq \mathfrak{so}(6)$ and hence have supercharges in the anti-symmetric tensor representation.
%We choose this convention as it is more uniform in regards to ${\cal N}$. In other words, for any ${\cal N}$ the R-symmetry algebra is $\mathfrak{so}(n)$, whereas the isomorphism $\mathfrak{so}(6)\simeq \mathfrak{su}(4)$ is specific to ${\cal N}=6$.
As for different values of ${\cal N}$, we note that ABJM with $k=1,2$ has ${\cal N}=8$ while ${\cal N}=5$ appears in the context of ABJ quadrality \cite{Honda:2017nku}. See also \cite{0804.2907} for a classification of various ${\cal N} \geq 4$ Chern-Simons matter theories based on supergroups. This convention will be important when defining BPS operators.

\subsection{The supercharge and BPS letters}

There are immediate problems with trying to pick $Q = Q_{-6}$ in analogy with the standard procedure to define $\tfrac{1}{16}$-BPS operators with 4d ${\cal N}=4$ SUSY. First,
\begin{align}
\{ Q_{\alpha r}, Q_{\beta s} \} = 2 \delta_{rs} P_{\alpha\beta} \label{3d-algebra1}
\end{align}
means that such a $Q$ would not be nilpotent. Second,
\begin{align}
\{ Q_{\alpha r}, S_s^\beta \} = 2i \left [ \delta_{rs} \left ( \delta_\alpha^\beta D + M_\alpha^{\;\;\beta} \right ) - i \delta_\alpha^\beta R_{rs} \right ] \label{3d-algebra2}
\end{align}
shows that R-symmetry would completely drop out of $\{ Q, Q^\dagger \}$. Any representation annihilated by this anti-commutator would have to lie below a unitarity bound that one can already get from the conformal algebra without SUSY. Indeed, superconformal unitarity bounds at threshold occur when the \textit{eigenvalues} of $\{ Q_{\alpha r}, S_s^\beta \}$ vanish. The vanishing of a particular diagonal entry is not significant in general.

Clearly, these conclusions do not depend on the choice $Q = Q_{-6}$. Whenever $Q$ is a real linear combination of the $Q_{\alpha r}$, its square will be too interesting and its kernel will be too boring. These problems can be traced to the fact that the fundamental of $\mathfrak{so}(\mathcal{N})$, in which the supercharges transform, is real. Fortunately, we can recall that the fundamental for $\mathcal{N} = 2$ splits into positively and negatively charged irreps which are complex \cite{cdi16}. This trick, which has previously been used in \cite{0903.4172}, allows us to define a nilpotent $Q$ and $Q^\dagger$ using the complex combinations
\begin{align}
Q \equiv Q_{-1} + i Q_{-2}, \quad Q^\dagger \equiv S_1^- - i S_2^-. \label{3d-qdef}
\end{align}
They generate an $\mathcal{N} = 2$ superconformal algebra inside the $\mathcal{N} = 6$ one. Referring to \eqref{3d-algebra2} again, BPS states are annihilated by (see \cite{d08})
\begin{align}\label{eq:bps-bound-abjm}
\{ Q, Q^\dagger \} = 4E - 4J - 2(2R_1 + R_2 + R_3)
\end{align}
where $R_1$, $R_2$ and $R_3$ are Dynkin labels. It will also be convenient to use the Cartans
\begin{align}
H_1 = R_1 + \frac{R_2 + R_3}{2}, \quad H_2 = \frac{R_2 + R_3}{2}, \quad H_3 = \frac{R_2 - R_3}{2}
\end{align}
which are charges with respect to rotation in three orthogonal planes.

States in ABJM theory whose energies do not scale with $k$ are created by gauge-invariant operators built as linear combinations of multi-traces on the fundamental fields. Demanding them to be $\frac{1}{12}$-BPS further introduces the constraint $\{Q,Q^\dagger\}{\cal O}=0$. By a well-known theorem, that we review and prove in Appendix \ref{app:bps-cohomology-isomorphism}, these operators are in one-to-one correspondence with the cohomology classes of $Q$, there being one and only one representative of each such class that obeys the BPS condition. By \eqref{eq:bps-bound-abjm}, the BPS condition relates the energy $E$ to spin and R charges which cannot change continuously. It is therefore clear that the scaling dimension of a BPS operator does not receive quantum corrections. Combined with the fact that \eqref{eq:bps-bound-abjm} is non-negative, we see that for an operator to be BPS at the quantum level, it must be built from fields which are at least BPS at the classical level. These fields are called \textit{BPS letters}. In the rest of this subsection, we will identify the BPS letters of ABJM theory and determine how the \textit{one-loop} supercharge acts on them.

%To study the cohomology problem it is convenient to build the operators out of the fundamental fields that satisfy the classical BPS condition, in which the classical scaling dimension appears, instead of the full quantum-corrected one. These fundamental fields are what we define to be the \textbf{BPS letters}.
To find explicit expressions for the BPS letters in ABJM theory, it is helpful to note that there is a ``little algebra'' $\mathfrak{so}(4) \subset \mathfrak{so}(6)$ preserving both vectors in \eqref{3d-qdef}.\footnote{This is part of the so called centralized algebra which will be introduced in the next subsection.} In particular, consider the $4 \times 4$ Weyl matrices $\Gamma_{AB}$ and $\bar{\Gamma}^{AB}$ which comprise the $8 \times 8$ Dirac matrices. It is convenient to use conventions from \cite{bls08} and take
\begin{align}
\Gamma &= \begin{bmatrix} \sigma_2 & 0 \\ 0 & \sigma_2 \end{bmatrix}, \begin{bmatrix} i\sigma_2 & 0 \\ 0 & -i\sigma_2 \end{bmatrix}, \begin{bmatrix} 0 & i\sigma_2 \\ i\sigma_2 & 0 \end{bmatrix}, \begin{bmatrix} 0 & \textbf{1} \\ -\textbf{1} & 0 \end{bmatrix}, \begin{bmatrix} 0 & -i\sigma_1 \\ i\sigma_1 & 0 \end{bmatrix}, \begin{bmatrix} 0 & -i\sigma_3 \\ i\sigma_3 & 0 \end{bmatrix} \label{weyl-basis}
%\bar{\Gamma} &= \begin{bmatrix} \sigma_2 & 0 \\ 0 & \sigma_2 \end{bmatrix}, \begin{bmatrix} -i\sigma_2 & 0 \\ 0 & i\sigma_2 \end{bmatrix}, \begin{bmatrix} 0 & -i\sigma_2 \\ -i\sigma_2 & 0 \end{bmatrix}, \begin{bmatrix} 0 & -I \\ I & 0 \end{bmatrix}, \begin{bmatrix} 0 & -i\sigma_1 \\ i\sigma_1 & 0 \end{bmatrix}, \begin{bmatrix} 0 & -i\sigma_3 \\ i\sigma_3 & 0 \end{bmatrix}. \nonumber
\end{align}
with the other chirality determined by $\bar{\Gamma}^{AB} = -\Gamma^*_{AB}$.
The spinor representations of $R_{12}$, which generates the commutant of $\mathfrak{so}(4)$ in $\mathfrak{so}(6)$, function as chiral matrices allowing a further projection onto $\pm 1$ eigenspaces. In the basis \eqref{weyl-basis}, they take the very simple form
\begin{align}
i (\Gamma_1 \bar{\Gamma}_2)_A^{\;\;B} = \begin{bmatrix} \delta_a^b & 0 \\ 0 & -\delta_a^b \end{bmatrix}, \quad i (\bar{\Gamma}_1 \Gamma_2)^A_{\;\;B} = \begin{bmatrix} -\delta^a_b & 0 \\ 0 & \delta^a_b \end{bmatrix}
\end{align}
which means that we can write our fields as
\begin{align}
\phi_A = \begin{bmatrix} \phi^L_a \\ \phi^R_a \end{bmatrix}, \quad \bar{\phi}^A = \begin{bmatrix} \bar{\phi}_L^a \\ \bar{\phi}_R^a \end{bmatrix}, \quad \psi^A_\alpha = \begin{bmatrix} \psi_{L\alpha}^a \\ \psi_{R\alpha}^a \end{bmatrix}, \quad \bar{\psi}_{A\alpha} = \begin{bmatrix} \bar{\psi}^L_{a\alpha} \\ \bar{\psi}^R_{a\alpha} \end{bmatrix}
\end{align}
for $a = 1, 2$ and conclude that $\phi^L_a, \bar{\psi}^L_{a +}, \bar{\phi}_R^a, \psi_{R +}^a$ are BPS. In addition to dropping the $L$, $R$ and $\alpha = +$ labels from now on, we will also rescale fields for later convenience and define our BPS letters to be
\begin{align}
\phi_a \equiv 2 \phi^L_a, \quad \bar{\psi}_a \equiv -2i \bar{\psi}^L_{a +}, \quad \bar{\phi}^a \equiv 2 \bar{\phi}_R^a, \quad \psi^a \equiv -2i \psi_{R +}^a. \label{bps-matter}
\end{align}
The covariant derivative in \eqref{d-action} also leads to a BPS letter. We take this to be
\begin{align}
D \equiv iD_{++}.
\end{align}

To study the $Q$-cohomology, we further need the SUSY transformations of the BPS letters. Since $Q$ is defined by \eqref{3d-qdef}, these involve the matrices
\begin{align}
(\Gamma_1 + i\Gamma_2)_{AB} = \begin{bmatrix} 0 & 0 \\ 0 & -2i \epsilon_{ab} \end{bmatrix}, \quad
(\bar{\Gamma}_1 + i\bar{\Gamma}_2)^{AB} = \begin{bmatrix} -2i\epsilon^{ab} & 0 \\ 0 & 0 \end{bmatrix}. \label{weyl-combos}
\end{align}
They also involve Dirac matrices for the \textit{spacetime} rotations which we take to be
\begin{align}
\gamma_0 = i\sigma_2, \quad \gamma_1 = \sigma_1, \quad \gamma_2 = \sigma_3.
\end{align}
In this convention, $\gamma_0$ is the matrix that appears when building Lorentz scalars. It is now possible to take the known transformation laws
\begin{equation}
\begin{split}
\delta \phi_A &= i(\Gamma_r)_{AB} \varepsilon^{r\alpha} (\gamma_0)_\alpha^{\;\;\beta} \psi^B_\beta \\
\delta \bar{\phi}^A &= i(\bar{\Gamma}_r)^{AB} \varepsilon^{r\alpha} (\gamma_0)_\alpha^{\;\;\beta} \bar{\psi}_{B\beta} \\
\delta \psi^A_\alpha &= -(\bar{\Gamma}_r)^{AB} (\gamma^\mu)_\alpha^{\;\;\beta} \varepsilon^r_\beta D_\mu \phi_B + (\bar{\Gamma}_r)^{AB} \varepsilon^r_\alpha (\phi_C \bar{\phi}^C \phi_B - \phi_B \bar{\phi}^C \phi_C) - 2(\bar{\Gamma}_r)^{BC} \varepsilon^r_\alpha \phi_B \bar{\phi}^A \phi_C \\
\delta \bar{\psi}_{A\alpha} &= (\Gamma_r)_{AB} (\gamma^\mu)_\alpha^{\;\;\beta} \varepsilon^r_\beta D_\mu \bar{\phi}^B + (\Gamma_r)_{AB} \varepsilon^r_\alpha (\bar{\phi}^C \phi_C \bar{\phi}^B - \bar{\phi}^B \phi_C \bar{\phi}^C) - 2(\Gamma_r)_{BC} \varepsilon^r_\alpha \bar{\phi}^B \phi_A \bar{\phi}^C \\
\delta A\bar{A}_\mu &= (\Gamma_r)_{AB} \varepsilon^{r\alpha} (\gamma_0 \gamma_\mu)_\alpha^{\;\;\beta} \psi^A_\beta \bar{\phi}^B + (\bar{\Gamma}_r)^{AB} \phi_B \varepsilon^{r\alpha} (\gamma_0 \gamma_\mu)_\alpha^{\;\;\beta} \bar{\psi}_{A\beta} \\
\delta \bar{A}A_\mu &= (\Gamma_r)_{AB} \bar{\phi}^B \varepsilon^{r\alpha} (\gamma_0 \gamma_\mu)_\alpha^{\;\;\beta} \psi^A_\beta + (\bar{\Gamma}_r)^{AB} \varepsilon^{r\alpha} (\gamma_0 \gamma_\mu)_\alpha^{\;\;\beta} \bar{\psi}_{A\beta} \phi_B
\end{split} \label{full-variations}
\end{equation}
and see that many terms drop out when we specialize indices to the ones in \eqref{bps-matter}. This is because the BPS superpartners in $(\Box, \bar{\Box})$ for instance have opposite chiralities whereas the \eqref{weyl-combos} matrices preserve chirality. We then get
\begin{equation}
    \begin{aligned}
	[Q, \phi_a] &= 0, &\qquad [Q, \bar{\phi}^a] &= 0 \\
        \{ Q, \psi^a \} &= \epsilon^{bc} \phi_b \bar{\phi}^a \phi_c, &\qquad \{ Q, \bar{\psi}_a \} &= \epsilon_{bc} \bar{\phi}^b \phi_a \bar{\phi}^c,\\
	[Q, A\bar{A}] &= \epsilon_{ab} \psi^a \bar{\phi}^b + \epsilon^{ab} \phi_b \bar{\psi}_a, &\qquad [Q, \bar{A}A] &= \epsilon_{ab} \bar{\phi}^b \psi^a + \epsilon^{ab} \bar{\psi}_a \phi_b
    \end{aligned}\label{q-action-abjm}
\end{equation}
by applying
\begin{align}
[ Q_{\alpha r}, \mathcal{O} \} = \frac{\delta \mathcal{O}}{\delta (\varepsilon^r \gamma_0)^\alpha} \label{3d-spacetime-gammas}
\end{align}
to the remaining terms.\footnote{The notation of \cite{bls08} would hide the $\gamma_0$ by using a bar. We are instead using bars to keep track of bifundamentals so we have written the $\gamma_0$ explicitly.} Note that we arrived at \eqref{full-variations} by taking the results of \cite{bls08} and using the identities
\begin{align}
& \bar{\psi}_A^T \gamma_0 \varepsilon^r = (\bar{\psi}_A^T \gamma_0 \varepsilon^r)^T = -\varepsilon^{iT} \gamma_0^T \bar{\psi}_A = \varepsilon^{iT} \gamma_0 \bar{\psi}_A \nonumber \\
& \bar{\psi}_A^T \gamma_0 \gamma_\mu \varepsilon^r = (\bar{\psi}_A^T \gamma_0 \gamma_\mu \varepsilon^r)^T = -\varepsilon^{iT} \gamma_\mu^T \gamma_0^T \bar{\psi}_A = -\varepsilon^{iT} \gamma_0 \gamma_\mu \bar{\psi}_A \label{useful-ids1} \\
& (\gamma^\mu)_\alpha^{\;\;\beta} (\gamma_\mu)_\gamma^{\;\;\delta} = 2 \delta^\delta_\alpha \delta^\beta_\gamma - \delta^\beta_\alpha \delta^\delta_\gamma \Longrightarrow (\gamma^\mu)_{++} (\gamma_\mu)_-^{\;\;\alpha} = 2 \epsilon_{-+} \delta_+^\alpha = -2 \delta_+^\alpha \nonumber
\end{align}
which are easy to verify.
% Note that if $\varepsilon^r$ were replaced by $\psi^A$ then these identities would no longer hold except in the $U(1) \times U(1)$ theory where everything commutes.
\begin{comment}
The BPS letters are, by construction, annihilated by the free contribution to the supercharge $Q$. They nevertheless have non-vanishing $Q$ action through the interacting terms that depend on the coupling, which are in correspondence to the quantum corrections that the dilatation operator receives in perturbation theory \cite{}. Consequently, to study the $Q$ cohomology we further need the supersymmetry transformation of the BPS letters. Relegating the calculation to Appendix \ref{app:q-action}, we find that the only non-trivial transformation of the BPS letters are:
\begin{equation}
    \begin{split}
        \{ Q, \psi^a \} &= \epsilon^{bc} \phi_b \bar{\phi}^a \phi_c,\\
\{ Q, \bar{\psi}_a \} &= \epsilon_{bc} \bar{\phi}^b \phi_a \bar{\phi}^c,\\
[Q, iA\bar{A}] &= \epsilon_{ab} \psi^a \bar{\phi}^b + \epsilon^{ab} \phi_b \bar{\psi}_a,\\
[Q, i\bar{A}A] &= \epsilon_{ab} \bar{\phi}^b \psi^a + \epsilon^{ab} \bar{\psi}_a \phi_b
    \end{split}
\end{equation}
\end{comment}

\subsection{The centralizer of the BPS condition}

The \textit{centralizer} of an element of a Lie superalgebra $\mathfrak{g}$ is the subalgebra of all elements of $\mathfrak{g}$ that (anti-)commute with the chosen element. The centralizer of multiple elements is their intersection. Choosing the $Q$ and $Q^\dagger$ elements of $\mathfrak{osp}(6|4)$, this is the subalgebra that preserves the BPS condition. As such it takes BPS states to BPS states, or equivalently, $Q$-cohomoogy classes to other $Q$-cohomology classes. In particular, BPS states must then be organized into representations of the centralizer. In our particular case, we find that the generators of the centralizer ${\cal C}(\{Q,Q^\dagger\})$ are
\begin{equation}
    \{D,M_-^{\phantom--},P_{++},K^{++}\}\cup \{R_{12}\}\cup \{R_{ij}: i,j\notin \{1,2\}\}
\end{equation}
in the bosonic sector and
\begin{equation}
    \{Q,Q^\dagger\}\cup \{Q_{+r}, S^+_{\phantom+r}:r\geq 3\}    
\end{equation}
in the fermionic sector.\footnote{Note that $\{Q,Q^\dagger\}$ here does not stand for the anti-commutator, but for the two-element set.}
The full computation of the algebra is provided in Appendix \ref{app:centralizer-calculation}.

The bosonic part of the algebra can be recognized as $\mathfrak{sl}(2)\oplus \mathfrak{u}(1)\oplus \mathfrak{so}(4)\oplus \mathfrak{u}(1)$. The $\mathfrak{sl}(2)\oplus \mathfrak{u}(1)$ generators are the combinations
\begin{equation}\label{eq:sl2-u1-gen-centralizer}
    {\cal H} = D - M_-^{\phantom - - },\quad P = \frac{1}{2}P_{++},\quad K = -\frac{1}{2}K^{++},\quad C = D+M_{-}^{\phantom - -}.
\end{equation}
where $\{{\cal H},P,K\}$ generate the $\mathfrak{sl}(2)$ and $C$ generathes the $\mathfrak{u}(1)$. The remaining $\mathfrak{so}(4)\oplus \mathfrak{u}(1)$ comes from R-symmetry, with $\{R_{ij} : i,j\neq \{1,2\}\}$ generating the $\mathfrak{so}(4)$ and $R_{12}$ generating the $\mathfrak{u}(1)$. Further computing the (anti-)commutators with the fermionic generators it turns out that $\{{\cal H},P,K,Q_{+r},S^+_{\phantom+ r}, R_{rs} : r,s\geq 3\}$ generates the superconformal algebra $\mathfrak{osp}(4|2)$ while $\{Q,Q^\dagger,C,R_{12}\}$ generates a $\mathfrak{u}(1|1)$ algebra. Altogether we find that the centralizer is given by
\begin{equation}
    {\cal C}(\{Q,Q^\dagger\}) \simeq \mathfrak{osp}(4|2)\oplus \mathfrak{u}(1|1).
\end{equation}
Two comments are in order. First, recent discussions in \cite{kllo25,bsvz25} have focused on $\mathfrak{osp}(4|2)$ as the interesting part of the centralizer but the $C$ and $R_{12}$ generators of $\mathfrak{u}(1|1)$ directly map onto the $E + J$ and $H_1$ Cartans which are also used to label BPS states. Fugacities for them will be introduced when we turn to the BPS partition function. Second, unlike with $\mathfrak{psu}(1,2|3)$ which appears in the centralized for $\mathcal{N} = 4$ SYM \cite{2306.04673}, $\mathfrak{osp}(4|2)$ is a lower-dimensional superconformal algebra.

\subsection{Comparison to $\mathcal{N} = 4$ SYM}

Building up the BPS spectrum from a restricted set of letters is an approach which has been widely studied in $\mathcal{N} = 4$ SYM \cite{0803.4183,1305.6314}. We will now review this story, not only because it provides context, but also because we will see a sharper connection to ABJM theory later on when we consider composite operators.

Supercharges in 4d transform in a reducible representation of the Lorentz algebra, so they come in two types --- left and right handed supercharges which sastisfy
\begin{align}
\{ Q^r_\alpha, \bar{Q}_{\dot{\alpha} s} \} = 2 \delta^r_s P_{\alpha \dot{\alpha}}
\end{align}
and anti-commute among themselves. We can therefore choose the nilpotent supercharge $Q = Q_{-6}$ without loss of generality. From
\begin{align}
\{ Q^r_\alpha, S_s^\beta \} = 2 \delta^r_s \left ( \delta^\beta_\alpha D + 2M_\alpha^{\;\;\beta} \right ) - 4 \delta^\beta_\alpha R^r_{\;\;s},
\end{align}
it becomes clear that
\begin{align}
\{ Q, Q^\dagger \} = 2E - 4J - (R_1 + 2R_2 + 3R_3).
\end{align}
The Dynkin labels for $\mathfrak{su}(4)$ are a permutation of those for $\mathfrak{so}(6)$ and map to Cartans via
\begin{align}
H_1 = R_2 + \frac{R_1 + R_3}{2}, \quad H_2 = \frac{R_3 + R_1}{2}, \quad H_3 = \frac{R_3 - R_1}{2}.
\end{align}

The fields in the Lagrangian are: covariant derivatives $D_{\alpha \dot{\alpha}}$ in the singlet of $\mathfrak{su}(4)$, left-handed spinors $\Psi_{\alpha r}$ in the fundamental, right-handed spinors $\bar{\Psi}^{\dot{\alpha} r}$ in the anti-fundamental and scalars $\Phi^{rs}$ transforming as anti-symmetric tensors. Going through all components, the ones which satisfy the BPS bound at zero coupling are
\begin{align}
\phi^n \equiv \Phi^{4n}, \quad \psi_n \equiv -i\Psi_{n+}, \quad \lambda_{\dot{\alpha}} \equiv \bar{\Psi}^4_{\dot{\alpha}}, \quad D_{\dot{\alpha}} \equiv D_{+ \dot{\alpha}}. \label{n4-bps-letters}
\end{align}
Instead of anti-symmetrized covariant derivatives, one can also use the field strength $f$ in
\begin{align}
[D_{\dot{\alpha}}, D_{\dot{\beta}}] = \epsilon_{\dot{\alpha} \dot{\beta}} f \equiv -i \epsilon_{\dot{\alpha} \dot{\beta}} F_{++} \label{n4-bps-f}
\end{align}
as a BPS letter.\footnote{It should be no surprise that \eqref{n4-bps-letters} fall into representations of $\mathfrak{su}(3)$. This is what stabilizes an $\mathfrak{su}(4)$ vector and it appears in the centralizer identified by \cite{2306.04673}.} Computing SUSY variations leads to the crucial relations
\begin{align}
\{ Q, \lambda_{\dot{\alpha}} \} &= 0 \quad [Q, \phi^n] = 0, \quad \{ Q, \psi_m \} = -i \epsilon_{m n p} [\phi^n, \phi^p], \quad [Q, f] = i[\phi^n, \psi_n] \nonumber \\
[ Q, D_{\dot{\alpha}} \zeta \} &= -i[ \lambda_{\dot{\alpha}}, \zeta \} + D_{\dot{\alpha}} [ Q, \zeta \}. \label{n4-qactions}
\end{align}
Although these look different from \eqref{q-action-abjm}, we will see that they can be mimicked within ABJM theory in a useful way.

\subsection{The question of renormalization}

The supercharge actions written above were read off from the SUSY variations of fields in a Lagrangian. If we normalize BPS letters to make them have a canonical kinetic term, the coupling dependence is schematically
\begin{align}
[Q, \mathcal{O}\} \sim g_{\text{YM}} \mathcal{O}' \;\; (\text{SYM}), \quad\quad\quad [Q, \mathcal{O}\} \sim k^{-1} \mathcal{O}' \;\; (\text{ABJM}). \label{q-coupling}
\end{align}
As expected, these vanish in the free theory which allows any product of BPS letters to still be BPS. As we deform away from the free theory, $\Delta = 2 \{ Q, Q^\dagger \}$ tells us that unprotected operators may receive anomalous dimensions starting at $O(g_{\text{YM}}^2)$ in SYM (one loop) and $O(k^{-2})$ in ABJM (two loops). It is well known that the leading quantum correction away from the free theory indeed lifts many operators above the BPS bound in such a way that the superconformal index is still preserved. If one wishes to learn about quantum gravity, it is crucial to understand whether operators vanishing under \eqref{q-coupling} remain BPS at higher orders. Conservatively, all we can say is that the number of loop orders which need to be checked in bounded by the number of BPS letters \cite{2306.01039}.

The strongest claim one might make is that all BPS operators are protected except for the ones which are lifted by the first quantum correction. Going back to \cite{0803.4183}, this has been conjectured many times for $\mathcal{N} = 4$ SYM but recent work has reached the opposite conclusion. First, the one-loop BPS partition functions for ${\rm SO(7)}$ SYM and ${\rm USp(6)}$ SYM were compared in \cite{cl25a}. In two different charge sectors, it was found that the enumeration for ${\rm SO(7)}$ produced one more cohomology class compared to ${\rm USp(6)}$. This is a sign of operator renormalization since the weak-coupling spectrum of one theory is believed to be the strong-coupling spectrum of the other by S-duality. Second, \cite{cl25b} showed that a fortuitous class found in \cite{glrt25} recombines with a monotone class as a result of the Konishi anomaly. This is a well known mechanism which prevents supercharges from satisfying the Leibniz rule in perturbation theory. Most recently, the $O(g_{\text{YM}}^2)$ correction to the supercharge of 4d gauge theories was computed in \cite{bk25} and found to take a simple form.

Although definitive checks have not yet been done for ABJM theory, we expect that it will also turn out to have classically BPS operators with anomalous dimensions starting at various powers of $k^{-1}$. In particular, we are skeptical of the conjecture made in \cite{bsvz25} that operators killed by \eqref{q-coupling} other than monopoles are not renormalized. As such, our statements in the following section about certain operators being fortuitous should be seen as provisional. On the other hand, the results of \cite{cl25a} suggest that higher-loop corrections which affect BPS counting might be relatively uncommon --- the discrepancy for monotone classes was seen for instance because ${\rm USp(6)}$ had 84 while ${\rm SO(7)}$ had 85. It will be interesting to see if the empirically observed sparsity becomes more pronounced as $N$ is increased.

\section{Two Types of Searches}\label{sec:abjm-details}

The transformation rules for BPS letters under $Q$ form the starting point for more detailed studies of the BPS spectrum. A computational task one can always do is proceeding level-by-level in $E + J$, which is preserved by $Q$, to determine which cohomology classes are monotone and which are fortuitous. This is a brute force search in the spirit of \cite{2209.06728} which will be done at the start of this section. As anticipated in \cite{bsvz25}, we will be able to find several fortuitous classes in ABJM theory on a laptop for small values of $N$. This contrasts nicely with the $\mathcal{N} = 4$ SYM case which required significant effort to find even the first fortuitous class for ${\rm SU}(2)$.

The second part of this section will focus on the fact that fortuitous cohomology classes have been found in other ways as well. In $\mathcal{N} = 4$ SYM, there are special subsectors enabling easier calculations \cite{2304.10155,2312.16443,2412.08695,gklm25} and these have led to infinite families. Our main result here will be that ABJM theory admits subsectors where $Q$ acts in the same way, thereby enabling a similar construction of infinite families. The only way for these to not be fortuitous is for some (as yet unknown) principle to make them $Q$-exact.

\subsection{Brute force enumeration}

After fixing a maximum value for $E + J$, the task now is to find the dimensions of two cohomologies --- the one for $Q$ restricted to \textit{general multi-trace operators} below the cutoff and the one for $Q$ restricted to \textit{multi-graviton operators} below the cutoff. When operators are constructed using an overcomplete generating set, this problem amounts to finding the rank of a large matrix. Due to the presence of four independent conserved charges, this matrix is block diagonal. One can therefore run the computation for different charge sectors in parallel. Charges of composite operators are assembled from those of the BPS letters which are shown in Table \ref{charge-table}.
\begin{table}[h]
\centering
\begin{tabular}{|c|ccccc|}
\hline
& $E$ & $J$ & $H_1$ & $H_2$ & $H_3$ \\
\hline
$Q$ & $\frac{1}{2}$ & $-\frac{1}{2}$ & $1$ & $0$ & $0$ \\
$\phi_a$ & $\frac{1}{2}$ & $0$ & $\frac{1}{2}$ & $\pm \frac{1}{2}$ & $\mp \frac{1}{2}$ \\
$\bar{\phi}^a$ & $\frac{1}{2}$ & $0$ & $\frac{1}{2}$ & $\pm \frac{1}{2}$ & $\pm \frac{1}{2}$ \\
$\psi^a$ & $1$ & $\frac{1}{2}$ & $\frac{1}{2}$ & $\pm \frac{1}{2}$ & $\pm \frac{1}{2}$ \\
$\bar{\psi}_a$ & $1$ & $\frac{1}{2}$ & $\frac{1}{2}$ & $\pm \frac{1}{2}$ & $\mp \frac{1}{2}$ \\
$D$ & $1$ & $1$ & $0$ & $0$ & $0$ \\
\hline
\end{tabular}
\caption{Cartan charges of the BPS letters under the maximally bosonic subalgebra of $\mathfrak{osp}(6|4)$. By convention, we associate the upper sign with $a = 1$ and the lower sign with $a = 2$.}
\label{charge-table}
\end{table}

The number of multi-trace operators to consider grows rapidly with $E + J$. A demonstration of this for $\mathcal{N} = 4$ SYM with gauge group ${\rm SU}(2)$ is shown in Figure \ref{n4-su2-plot}. This case has two extra charges compared to ABJM. One of them is the second Cartan of the rotation group while the other is associated with the ${\rm U}(1)$ bonus symmetry of \cite{hep-th/9811047,hep-th/9905020}.\footnote{See \cite{1406.4814,2103.15830} for bonus symmetries which have been proposed for 3d $\mathcal{N} \geq 4$ SCFTs.} Nevertheless, the charge assignments in ABJM come with two simplifying features. One of them is $E + J \in \mathbb{N}$ which follows from the need to pair barred letters with unbarred letters. The other is a $D_4$ automorphism which cuts down the required work by almost a factor of 8. Indeed, the structure of the cohomology is unchanged by orbits of the last two Cartans under
\begin{align}
H_2 \leftrightarrow -H_2, \quad H_3 \leftrightarrow -H_3, \quad H_2 \leftrightarrow H_3.
\end{align}
As actions on fields, these transformations correspond to $(\phi_1, \phi_2, \psi^1, \psi^2) \leftrightarrow (\bar{\phi}^1, \bar{\phi}^2, \bar{\psi}_1, \bar{\psi}_2)$, $(\phi_1, \phi_2, \psi^1, \psi^2) \leftrightarrow (\bar{\phi}^2, \bar{\phi}^1, \bar{\psi}_2, \bar{\psi}_1)$ and $(\phi_1, \bar{\phi}^1, \psi^1, \bar{\psi}_1) \leftrightarrow (\phi_2, \bar{\phi}^2, \psi^2, \bar{\psi}_2)$ respectively.
\begin{figure}[h]
\centering
\includegraphics[scale=0.75]{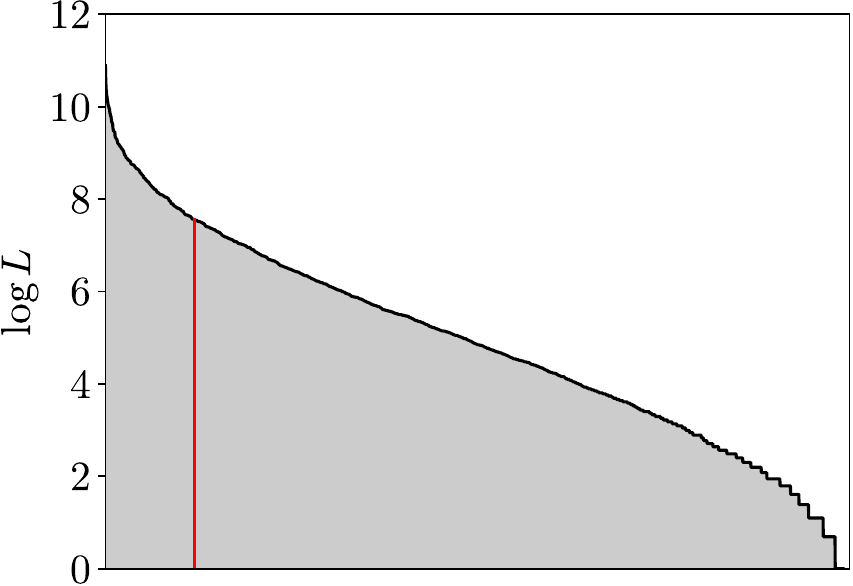}
\caption{Distribution of file sizes in the search for the simplest fortuitous cohomology class in $\mathcal{N} = 4$ SYM up to $E + J_L = 12$ --- the 15864 files are associated with the different $(Y, J_L, J_R, H_1, H_2, H_3)$ charge sectors. Within a given file, each of the $L$ lines specifies a different admissible grouping of single-trace words into a multi-trace word. The single-trace words we use are limited to length $3$ instead of $2$ for the technical reason described in the text. The longest file corresponds to $\left ( 8, \frac{3}{2}, 0, \frac{5}{2}, \frac{5}{2}, \frac{5}{2} \right )$ and has 53546 lines. The shortest files have one line and there are 173 of them. The file containing the fortuitous class, labelled in red, is $\left ( 7, \frac{5}{2}, 0, \frac{3}{2}, \frac{3}{2}, \frac{3}{2} \right )$ with 1908 lines.}
\label{n4-su2-plot}
\end{figure}

Because some charge sectors are more populated than others and many contain no fortuitous states at all, one might imagine that it is possible to make gains by performing only a limited search. This was first attempted in \cite{1305.6314} which missed the red line in Figure \ref{n4-su2-plot}. More recently, refined searches of $\mathcal{N} = 4$ SYM based on the index have been used to great effect in \cite{2304.10155,2312.16443,2412.08695,gklm25}. Postponing studies of this nature to future work, we will simply visit all charge sectors allowed by the cutoff to be sure we do not miss fortuitous states in ABJM. Our calculations follow the scheme outlined in \cite{2209.06728} and are implemented in three scripts attached to this paper's arXiv submission.

\textbf{ABJM\_step1.py}. This script accepts two parameters: \texttt{N} and \texttt{cutoff}. It is responsible for constructing all of the necessary multi-trace words. Explicit expressions for single-trace words are output line-by-line into a file called \texttt{st\_words}. Those that are $Q$-closed (single gravitons) are similarly expressed in a file called \texttt{sg\_words}.\footnote{To identify these, it is very important to remove formal traces which happen to vanish for all $N$. An example is $\text{Tr}(\psi^1 \bar{\phi}^1 \psi^1 \bar{\phi}^1)$. This goes to minus itself under cyclicity because the $N^2$ entries of $\psi^1$ anti-commute.} Multi-trace words for the various charge sectors are then stored in files with names of the form \texttt{2J\_2H1\_2H2\_2H3\_mt} and \texttt{2J\_2H1\_2H2\_2H3\_mg} so as to avoid half-integers. The suffix \texttt{mt} indicates ordinary multi-trace words while \texttt{mg} is used for multi-gravitons. Within these files, lines are tuples of integers which refer to line numbers of either \texttt{st\_words} or \texttt{sg\_words} as appropriate. An important optimization is that certain trace relations are accounted for right away. When we form multi-trace words of $N \times N$ matrices in all possible ways, it is redundant to use single-trace words with more than $N$ matrices. Since barred and unbarred letters alternate in ABJM, it is really pairs of letters which count as matrices in this context. For this reason, we only work with single-graviton words having $2N$ letters or fewer. With ordinary single-trace words, we allow the length to reach $2N + 2$ since we need to be able to compute the kernel of $Q$ which creates two extra letters.

\textbf{ABJM\_step2.jl}. This script accepts a string of the form \texttt{2J\_2H1\_2H2\_2H3} which we refer to as twice the charge vector $\textbf{q}$. It outputs three matrices in sparse CSR format which capture the effects of the remaining trace relations: \texttt{2J\_2H1\_2H2\_2H3\_mt.jls}, \texttt{2J\_2H1\_2H2\_2H3\_Q\_mt.jls} and \texttt{2J\_2H1\_2H2\_2H3\_mg.jls}. To form the first matrix, we read the \texttt{mt} file corresponding to $\textbf{q}$, expand the traces into component form and finally sort all field products into lexicographic order while recalling that fermions anti-commute. For the second matrix, we instead read the \texttt{mt} file corresponding to $\textbf{q} - \textbf{q}_0 = \textbf{q} - (-1, 2, 0, 0)$. We then act with $Q$ on these words before expanding the traces and sorting. To form the third matrix, we simply append rows to the second. These rows are sorted and expanded traces which come from reading the \texttt{mg} file corresponding to $\textbf{q}$. The three matrices are said to have row spaces
\begin{align}
V_{\textbf{q}}, \quad QV_{\textbf{q} - \textbf{q}_0}, \quad \text{span}(W_{\textbf{q}} \cup QV_{\textbf{q} - \textbf{q}_0})
\end{align}
respectively.

\textbf{ABJM\_step3.jl}. This script uses QR factorization to compute the ranks of the various matrices. Defining $n$ to be the $E + J$ cutoff, $\text{dim}(V_{\textbf{q}})$ becomes a line in \texttt{count\_before\_n}, $\text{dim}(QV_{\textbf{q} - \textbf{q}_0})$ becomes a line in \texttt{count\_after\_n} and $\text{dim}(\text{span}(W_{\textbf{q}} \cup QV_{\textbf{q} - \textbf{q}_0}))$ becomes a line in \texttt{count\_with\_n}. It is then straightforward to verify that the contribution of $\textbf{q}$ to $\text{dim}(H^{\ast}_{Q_N}(\mathcal{H}_N))$ is
\begin{align}
\text{dim}(V_{\textbf{q}}) - \text{dim}(QV_{\textbf{q}}) - \text{dim}(QV_{\textbf{q} - \textbf{q}_0}).
\end{align}
The analogous cohomology restricted to multi-graviton states has dimension
\begin{align}
\text{dim}(\text{span}(W_{\textbf{q}} \cup QV_{\textbf{q} - \textbf{q}_0})) - \text{dim}(QV_{\textbf{q} - \textbf{q}_0}).
\end{align}

\subsection{Counting at low level}

The results of the algorithm described above can be packaged into the superconformal index
\begin{align}
I(x, \textbf{y}) = \text{Tr} \left [ (-1)^F x^{E + J} y_2^{H_2} y_3^{H_3} \right ] \label{i-def}
\end{align}
or the BPS partition function
\begin{align}
Z(x, \textbf{y}) = \text{Tr} \left [ x^{E + J} y_1^{H_1} y_2^{H_2} y_3^{H_3} \right ]. \label{z-def}
\end{align}
The trace must be restricted to BPS states in \eqref{z-def} but this is automatic in \eqref{i-def}. The index for ABJM theory was computed for all $k$ in \cite{0903.4172} using localization.\footnote{When $k$ is asymptotically large (allowing us to interpret it as a coupling constant and not just a level), it ceases to affect light operators in the index as expected.} This does not eliminate the need for explicit state counting because there are presently no exact results for the multi-graviton version of the index. As such, we will phrase our results in terms of the BPS partition function which can be analytically continued and unrefined to arrive at the index as described in \cite{kllo25}.

It will be convenient to use the basis of $\mathfrak{su}(2) \oplus \mathfrak{su}(2)$ Dynkin labels instead of the Cartans $H_2$ and $H_3$. This can be done by defining the fugacities
\begin{align}
y_1 = y, \quad y_2 = y_+ y_-, \quad y_3 = y_+ / y_-.
\end{align}
The natural $\mathfrak{su}(2)$ characters are then
\begin{align}
\chi_{2j}(y_\pm) = \sum_{m = -j}^j y_\pm^{2m}
\end{align}
for the spin-$j$ representation. In the fortuitous partition function $Z_N - Z_N^{\text{grav}}$, it will be convenient to factor out
\begin{align}
\chi(x, y_2, y_3) = \frac{\prod_\pm (1 + xy_2^{\pm 1}) (1 + xy_3^{\pm 1})}{1 - x^2}
\end{align}
which accounts for descendants in long multiplets of $\mathfrak{osp}(4|2)$. To do this for $Z_N$ and $Z_N^{\text{grav}}$ separately, we will need to account for short multiplets at $E + J = H_2$ --- the unitarity bound of $\mathfrak{osp}(4|2)$. The characters for these have the expression
\begin{align}
\hat{\chi}_n(x, \textbf{y}) = \sum_{k = 0}^\infty (-x)^k \chi_{n + k}(y_+) \chi_{n + k}(y_-) \chi(x, y_2, y_3)
\end{align}
which was found in \cite{1911.10391} using recombination rules. The existence of a positive expansion in these characters provides a strong check on our results.

Starting with $N = 2$, the BPS partition function has the expansion
\begin{align}
& \frac{Z_2 - \sum_n \lfloor \tfrac{n + 2}{2} \rfloor x^n y^n \hat{\chi}_n - 1}{\chi} = x^2 y^2 + x^3 \Big [ 2y^2 \chi^+_1 \chi^-_1 + y^3 (\chi^+_1 \chi^-_1 + \chi^+_1 \chi^-_3 + \chi^+_3 \chi^-_1) \Big ] \\
& \qquad + x^4 \Big [ y^2 (\chi^+_2 + \chi^-_2) + y^3 (1 + 2\chi^+_2 + 2\chi^-_2 + 4\chi^+_2 \chi^-_2) + y^4 (1 + \chi^+_4 + \chi^-_4 + \chi^+_2 \chi^-_4 + \chi^+_4 \chi^-_2 + 2 \chi^+_2 \chi^-_2) \Big ] \nonumber \\
& \qquad + x^5 \Big [ y^2 \chi^+_1 \chi^-_1 + 3y^3 (2 \chi^+_1 \chi^-_1 + \chi^+_1 \chi^-_3 + \chi^+_3 \chi^-_1) + y^4 (3 \chi^+_1 \chi^-_1 + 4 \chi^+_1 \chi^-_3 + 4 \chi^+_3 \chi^-_1 + 6 \chi^+_3 \chi^-_3) \nonumber \\
& \qquad + y^5 (\chi^+_1 \chi^-_1 + 2 \chi^+_3 \chi^-_3 + \chi^+_1 \chi^-_3 + \chi^+_3 \chi^-_1 + \chi^+_1 \chi^-_5 + \chi^+_5 \chi^-_1 + 2 \chi^+_3 \chi^-_5 + 2 \chi^+_5 \chi^-_3) \Big ] + O(x^6). \nonumber
\end{align}
We can also subtract the multi-graviton partition function to find
\begin{align}
\frac{Z_2 - Z_2^{\text{grav}}}{\chi} &= x^3 y^2 \chi^+_1 \chi^-_1 + x^4 y^3 (1 + \chi^+_2 \chi^-_2) + x^5 \Big [ y^3 (\chi^+_1 \chi^-_1 + \chi^+_1 \chi^-_3 + \chi^+_3 \chi^-_1) + y^4 \chi^+_3 \chi^-_3 \Big ] + O(x^6). \label{zfort-n2}
\end{align}
This reveals the fortuitous state found in \cite{bsvz25} at $E + J = 3$ along with six more fortuitous multiplets of $\mathfrak{osp}(4|2) \oplus \mathfrak{u}(1|1)$ at higher levels.

Looking at $N = 3$ next, we have the BPS partition function
\begin{align}
& \frac{Z_3 - \sum_n n x^n y^n \hat{\chi}_n - 1}{\chi} = x^2 y^2 + x^3 \Big [ y^2 \chi^+_1 \chi^-_1 + y^3 (2 \chi^+_1 \chi^-_1 + \chi^+_1 \chi^-_3 + \chi^+_3 \chi^-_1) \Big ] \\
& \qquad + x^4 \Big [ y^2 (\chi^+_2 + \chi^-_2) + y^3 (1 + 2\chi^+_2 + 2\chi^-_2 + 4\chi^+_2 \chi^-_2) \nonumber \\
& \qquad + y^4 (2 + 4 \chi^+_2 \chi^-_2 + \chi^+_2 + \chi^-_2 + 2 \chi^+_4 + 2 \chi^-_4 + 2 \chi^+_2 \chi^-_4 + 2 \chi^+_4 \chi^-_2) \Big ] \nonumber \\
& \qquad + x^5 \Big [ y^2 \chi^+_1 \chi^-_1 + y^3 (5 \chi^+_1 \chi^-_1 + 2 \chi^+_1 \chi^-_3 + 2 \chi^+_3 \chi^-_1) + 8 y^4 (\chi^+_1 \chi^-_1 + \chi^+_1 \chi^-_3 + \chi^+_3 \chi^-_1 + \chi^+_3 \chi^-_3) \nonumber \\
& \qquad + y^5 (4 \chi^+_1 \chi^-_1 + 6 \chi^+_3 \chi^-_3 + 4 \chi^+_1 \chi^-_3 + 4 \chi^+_3 \chi^-_1 + 3 \chi^+_1 \chi^-_5 + 3 \chi^+_5 \chi^-_1 + 4 \chi^+_3 \chi^-_5 + 4 \chi^+_5 \chi^-_3) \Big ] + O(x^6). \nonumber
\end{align}
Computing the multi-graviton partition function as well and subtracting it, the result is
\begin{align}
\frac{Z_3 - Z_3^{\text{grav}}}{\chi} &= x^4 y^3 (1 + \chi^+_2 \chi^-_2) + x^5 y^4 (2 \chi^+_1 \chi^-_1 + \chi^+_1 \chi^-_3 + \chi^+_3 \chi^-_1 + \chi^+_3 \chi^-_3) + O(x^6). \label{zfort-n3}
\end{align}
Comparing to \eqref{zfort-n2}, both primaries at $O(x^4)$ and one at $O(x^5)$ have remained intact, suggesting that the trace relations making them $Q$-closed are valid at $N = 2,3$.

Finally, the desired partition functions for $N = 4$ are
\begin{align}
& \frac{Z_4 - \sum_n \left ( n + \lfloor |\tfrac{n - 2}{2}| \rfloor \right ) \hat{\chi}_n - 1}{\chi} = x^2 y^2 + x^3 \Big [ y^2 \chi^+_1 \chi^-_1 + y^3 (2 \chi^+_1 \chi^-_1 + \chi^+_1 \chi^-_3 + \chi^+_3 \chi^-_1) \Big ] \\
& \qquad + x^4 \Big [ y^2 (\chi^+_2 + \chi^-_2) + y^3 (2\chi^+_2 + 2\chi^-_2 + 3\chi^+_2 \chi^-_2) \nonumber \\
& \qquad + y^4 (3 + 5 \chi^+_2 \chi^-_2 + \chi^+_2 + \chi^-_2 + 2 \chi^+_4 + 2 \chi^-_4 + 2 \chi^+_2 \chi^-_4 + 2 \chi^+_4 \chi^-_2) \Big ] \nonumber \\
& \qquad + x^5 \Big [ y^2 \chi^+_1 \chi^-_1 + y^3 (5 \chi^+_1 \chi^-_1 + 2 \chi^+_1 \chi^-_3 + 2 \chi^+_3 \chi^-_1) + y^4 (7 \chi^+_1 \chi^-_1 + 7 \chi^+_1 \chi^-_3 + 7 \chi^+_3 \chi^-_1 + 8 \chi^+_3 \chi^-_3) \nonumber \\
& \qquad + y^5 (6 \chi^+_1 \chi^-_1 + 8 \chi^+_3 \chi^-_3 + 6 \chi^+_1 \chi^-_3 + 6 \chi^+_3 \chi^-_1 + 4 \chi^+_1 \chi^-_5 + 4 \chi^+_5 \chi^-_1 + 5 \chi^+_3 \chi^-_5 + 5 \chi^+_5 \chi^-_3) \Big ] + O(x^6). \nonumber
\end{align}
and
\begin{align}
\frac{Z_4 - Z_4^{\text{grav}}}{\chi} = x^5 y^4 (\chi^+_1 \chi^-_1 + \chi^+_3 \chi^-_3) + O(x^6). \label{zfort-n4}
\end{align}

\subsection{Explicit representatives}

Taking the viewpoint that fortuitous states are dual to extremal black holes, it is highly desirable to construct them microscopically. The partition functions $Z_N - Z_N^{\text{grav}}$ narrow down which letters should be used when doing so. Looking at the $N = 2$ result \eqref{zfort-n2} for instance, the requirements of $E = \frac{5}{2}$, $J = \frac{1}{2}$ and $H_1 = 2$ in the $O(x^3 y^2)$ term can only be satisfied with one fermion and three bosons. Their indices should be contracted in a way which produces a doublet under each $\mathfrak{su}(2)$ in the $\mathfrak{osp}(4|2)$ R-symmetry. At the level of cohomology, this state was constructed in \cite{bsvz25}. The result was given in \eqref{abjm-fort} which we restate here for convenience:
\begin{align}
\mathcal{O}^a_b &= \epsilon^{cd} \left [ 4 \text{Tr}(\bar{\psi}_c \phi_d \bar{\phi}^a \phi_b) - 3 \text{Tr}(\bar{\psi}_c \phi_d) \text{Tr}(\bar{\phi}^a \phi_b) \right ] - \epsilon_{cd} \left [ 4 \text{Tr}(\psi^c \bar{\phi}^d \phi_b \bar{\phi}^a) - 3 \text{Tr}(\psi^c \bar{\phi}^d) \text{Tr}(\phi_b \bar{\phi}^a) \right ]. \label{abjm-fort2}
\end{align}
We will now extend this construction to two additional fortuitous classes using a procedure which easily verifies that \eqref{abjm-fort2} is correct.

The idea is to take increasingly general linear combinations of multi-traces with the right quantum numbers until the kernel of $Q$ first exhibits $N$-dependence. Specifically, there should be a state in the  covering Hilbert space $\tilde{\mathcal{H}}$ whose $Q$ action is in $I_N$ but not $I_{N + 1}$. Starting with the most general linear combination right away would not be instructive because cohomologies are ambiguous up to the addition of $Q$-exact terms and fortuitous cohomologies are ambiguous up to the addition of monotone terms. As an example, we can look at \eqref{zfort-n3} which tells us to expect an R-symmetry singlet with $E = \frac{7}{2}$, $J = \frac{1}{2}$ and $H_1 = 3$. This narrows down our search space to multi-traces with one fermion and five bosons. Within this space,
\begin{align}
\mathcal{O} &= \epsilon^{ab} \epsilon_{cd} \epsilon^{ef} \left [ 12 \text{Tr}(\bar{\psi}_a \phi_b \bar{\phi}^c \phi_e \bar{\phi}^d \phi_f) - 8 \text{Tr}(\bar{\psi}_a \phi_b \bar{\phi}^c \phi_e) \text{Tr}(\bar{\phi}^d \phi_f) + 3 \text{Tr}(\bar{\psi}_a \phi_b) \text{Tr}(\bar{\phi}^c \phi_e) \text{Tr}(\bar{\phi}^d \phi_f) \right ] \nonumber \\
&- \epsilon_{ab} \epsilon^{cd} \epsilon_{ef} \left [ 12 \text{Tr}(\psi^a \bar{\phi}^b \phi_c \bar{\phi}^e \phi_d \bar{\phi}^f) - 8 \text{Tr}(\psi^a \bar{\phi}^b \phi_c \bar{\phi}^e) \text{Tr}(\phi_d \bar{\phi}^f) + 3 \text{Tr}(\psi^a \bar{\phi}^b) \text{Tr}(\phi_c \bar{\phi}^e) \text{Tr}(\phi_d \bar{\phi}^f) \right ] \label{abjm-fort-singlet}
\end{align}
is found to be fortuitous. If only five of the six multi-traces in \eqref{abjm-fort-singlet} were used, no linear combination of them would be $Q$-closed for $N = 3$ but not $N = 4$. Using all six (but not the much larger set of multi-traces we could have considered) is what allowed us to find one solution up to normalization. It remains to verify that \eqref{abjm-fort-singlet} is not $Q$-exact. For this, we have simply checked that it lies outside the span of the basis elements for $QV_{(1,2,0,0)}$ which were found as part of the BPS counting.

We can repeat this exercise for the other $O(x^4 y^3)$ term in \eqref{zfort-n3} which is a triplet under both $\mathfrak{su}(2)$ factors. This time, we have needed to include two basis elements in which the fermions are not contracted with their immediate neighbours. Comparing the $N = 3$ and $N = 4$ kernels of $Q$ leads to the solution
\begin{align}
\mathcal{O}^{ab}_{cd} &= \epsilon^{ef} \Big [ 3 \text{Tr}(\bar{\psi}_e \phi_{\{c} \bar{\phi}^{\{a} \phi_{|f|} \bar{\phi}^{b\}} \phi_{d\}}) + 6 \text{Tr}(\bar{\psi}_e \phi_f \bar{\phi}^{\{a} \phi_{\{c} \bar{\phi}^{b\}} \phi_{d\}}) - 3 \text{Tr}(\bar{\psi}_e \phi_f) \text{Tr}(\bar{\phi}^{\{a} \phi_{\{c} \bar{\phi}^{b\}} \phi_{d\}}) \nonumber \\
&- 8 \text{Tr}(\bar{\psi}_e \phi_f \bar{\phi}^{\{a} \phi_{\{c}) \text{Tr}(\bar{\phi}^{b\}} \phi_{d\}}) + 3 \text{Tr}(\bar{\psi}_e \phi_f) \text{Tr}(\bar{\phi}^{\{a} \phi_{\{c}) \text{Tr}(\bar{\phi}^{b\}} \phi_{d\}}) \Big ] \nonumber \\
&- \epsilon_{ef} \Big [ 3 \text{Tr}(\psi^e \bar{\phi}^{\{a} \phi_{\{c} \bar{\phi}^{|f|} \phi_{d\}} \bar{\phi}^{b\}}) + 6 \text{Tr}(\psi^e \bar{\phi}^f \phi_{\{c} \bar{\phi}^{\{a} \phi_{d\}} \bar{\phi}^{b\}}) - 3 \text{Tr}(\psi^e \bar{\phi}^f) \text{Tr}(\phi_{\{c} \bar{\phi}^{\{a} \phi_{d\}} \bar{\phi}^{b\}}) \nonumber \\
&- 8 \text{Tr}(\psi^e \bar{\phi}^f \phi_{\{c} \bar{\phi}^{\{a}) \text{Tr}(\phi_{d\}} \bar{\phi}^{b\}}) + 3 \text{Tr}(\psi^e \bar{\phi}^f) \text{Tr}(\phi_{\{c} \bar{\phi}^{\{a}) \text{Tr}(\phi_{d\}} \bar{\phi}^{b\}}) \Big ]. \label{abjm-fort-triplet}
\end{align}
Checking that all components lie outside the image of $Q$ requires a construction of $QV_{(1,2,0,0)}$, $QV_{(1,2,0,\pm 1)}$, $QV_{(1,2,\pm 1,0)}$ and $QV_{(1,2,\pm 1, \pm 1)}$ but it is clear by symmetry that these checks must either all succeed or all fail. In this case, they all succeed, indicating that \eqref{abjm-fort-triplet} is not $Q$-exact. Another thing we have checked is that the $Q$-closedness of \eqref{abjm-fort-singlet} and \eqref{abjm-fort-triplet} holds not just for $N = 3$ but for $N = 2$ as well. Of course for $N = 2$, one also has the option of rewriting a trace of six letters as a multi-trace.

Let us stress again that the expressions \eqref{abjm-fort2}, \eqref{abjm-fort-singlet} and \eqref{abjm-fort-triplet} are by no means unique. They simply belong to the same cohomolgy classes as fortuitous $\mathfrak{osp}(4|2)$ primaries in ${\rm U}(2)_k \times {\rm U}(2)_{-k}$ and ${\rm U}(3)_k \times {\rm U}(3)_{-k}$ ABJM theory. To find canonical representatives, one should work out the actual BPS states annihilated by both $Q$ and $Q^\dagger$. These diagonalize the Hamiltonian
\begin{align}
H = \{Q, Q^\dagger\} + 4J + 2(2R_1 + R_2 + R_3) \label{ham-2loop}
\end{align}
which computes anomalous dimensions.\footnote{The matrices we have computed in this work are not enough to determine the Hamiltonian. One also needs the two-point functions of all possible monomials in the BPS letters \cite{2306.04673}.} Within the zero eigenspace (which is where the BPS states live), one can construct an orthonormal basis and then rotate it to remove fortuity from the maximal number of operators. This is the approach which \cite{bmv23,2306.04673} applied to $\mathcal{N} = 4$ SYM with gauge group ${\rm SU}(2)$. In this case, comparisons to previous literature made it desirable to write their fortuitous state as the sum of \eqref{sym-fort} and a $Q$-exact term. In general however, the problem of finding the ``nicest'' representative is not well defined.

\subsection{More representatives from $\mathcal{N} = 4$ SYM}

The fortuitous cohomologies we have constructed so far have at most six letters, as can be seen from \eqref{abjm-fort-singlet} and \eqref{abjm-fort-triplet}. A reasonable next step is to also find cohomologies where the number of letters is unbounded. This was first achieved for $\mathcal{N} = 4$ SYM in \cite{2304.10155} using special properties of operators in which only $\phi^n$, $\psi_n$ and $f$ appear. The advantage of this truncation is that it coincides with the subsector in \cite{hep-th/0306054} which can be described by the BMN matrix model \cite{hep-th/0202021}. This allows the monotone index $I_2^{\text{grav}}(x, \textbf{y})$ to be computed exactly. Although the BMN matrix model is only equivalent to a subsector of $\mathcal{N} = 4$ SYM at the classical level, this is enough for studying the cohomology of a supercharge which can be derived from a classical action (such as the one we are using).

The result of \cite{2304.10155} is that \eqref{sym-fort} with $E + J_L = 12$ is merely the first fortuitous class in the infinite family
\begin{align}
&\mathcal{O}_n = \epsilon^{m_1 m_2 m_3} \text{Tr}(f^2)^n \text{Tr}(\phi^{m_4} \psi_{m_1}) \text{Tr}(\phi^{m_5} \psi_{m_2}) \text{Tr}(\psi_{m_3} [\psi_{m_4}, \psi_{m_5}]) \nonumber \\
&+ \frac{n}{2} \epsilon^{m_1 m_2 m_3} \epsilon^{m_4 m_5 m_6} \text{Tr}(f^2)^{n - 1} \text{Tr}(f \psi_{m_1}) \text{Tr}(\phi^{m_7} \psi_{m_4}) \text{Tr}(\psi_{m_2} \psi_{m_5}) \text{Tr}(\psi_{m_7} [\psi_{m_3}, \psi_{m_6}]) \label{sym-tower} \\
&+ \frac{n(2n + 1)}{6912} \epsilon^{m_1 m_2 m_3} \epsilon^{m_4 m_5 m_6} \epsilon^{m_7 m_8 m_9} \text{Tr}(f^2)^{n - 1} \text{Tr}(\psi_{m_1} [\psi_{m_4}, \psi_{m_7}]) \text{Tr}(\psi_{m_2} [\psi_{m_5}, \psi_{m_8}]) \text{Tr}(\psi_{m_3} [\psi_{m_6}, \psi_{m_9}]) \nonumber
\end{align}
which has $E + J_L = 12 + 6n$. These are part of the ${\rm SU}(2)$ theory but they can be easily lifted to ${\rm U}(2)$ by applying the map \eqref{subtract-trace}. We will now show that the resulting expressions can be ``lifted'' once more to cohomology classes of ${\rm U}(2)_k \times {\rm U}(2)_{-k}$ ABJM theory which are likely to be fortuitous.
%Later, we will see that this leads to the notion of a BPS superfield in ABJM theory as well.

The idea is to find ABJM operators whose actions under $Q$ display the same structure seen in $\phi^n$, $\psi_n$ and $f$ --- letters with $E = 1$, $\frac{3}{2}$ and $2$ respectively. Since these adjoint fields are allowed to multiply each other on either the left or the right, it is natural that the corresponding ABJM operators should be \textit{bilinears}. At $E = 1$, all bilinears must involve one copy of $\phi_a$ and one copy of $\bar{\phi}^a$. Similarly, $E = \frac{3}{2}$ requires us to either pair $\phi_a$ with $\bar{\psi}_a$ or $\psi^a$ with $\bar{\phi}^a$. At $E = 2$, suitable bilinears can either have two fermions with no derivatives or two bosons with one derivative. Choosing coefficients carefully, we have found that the desired behaviour can be achieved with
\begin{align}
\phi^n &\mapsto \frac{1}{2} \phi_a \bar{\phi}^b (\sigma^n)_b^{\;\;a} \nonumber \\
\psi_n &\mapsto -\frac{1}{2} \left [ \phi_a \bar{\psi}_b (\epsilon \sigma_n)^{ab} + \psi^a \bar{\phi}^b (\sigma_n \epsilon)_{ab} \right ] \label{bmn-identifications} \\
f &\mapsto \frac{i}{4} \left [ 2 \psi^a \bar{\psi}_a - \phi_a D \bar{\phi}^a + D \phi_a \bar{\phi}^a \right ] \nonumber
\end{align}
where $\sigma_n$ is a Pauli matrix.\footnote{The adjoint index on $\sigma_n = \sigma^n$ can be raised and lowered at will.}
With these identifications, it is clear that $Q$ acts trivially on $\phi^n$ but checking that the other two $Q$ actions come out right requires a calculation.

To verify that $\psi_n$ in \eqref{bmn-identifications} transforms correctly, it helps to take
\begin{align}
[\sigma_m, \sigma_n] = 2i \epsilon_{mnp} \sigma_p, \quad \{ \sigma_m, \sigma_n \} = 2\delta_{mn}
\end{align}
and rewrite $\epsilon_{mnp}$ times the identity matrix purely in terms of Pauli matrices. Using this trick, we have
\begin{align}
-i \epsilon_{mnp} [\phi^n, \phi^p] &= -\frac{1}{8} \phi_a \bar{\phi}^b \phi_c \bar{\phi}^d (\sigma^n)_b^{\;\;a} (\sigma^p)_d^{\;\;e} \left [ (\sigma_m \sigma_n \sigma_p)_e^{\;\;c} - (\sigma_m \sigma_p \sigma_n)_e^{\;\;c} + (\sigma_n \sigma_p \sigma_m)_e^{\;\;c} - (\sigma_p \sigma_n \sigma_m)_e^{\;\;c} \right ] \nonumber \\
&= -\frac{1}{2} \phi_a \bar{\phi}^b \phi_c \bar{\phi}^d \left [ (\sigma_m)_b^{\;\;a} \delta^c_d - 2 (\sigma_m)_b^{\;\;c} \delta^a_d + (\sigma_m)_d^{\;\;c} \delta^a_b \right ] \nonumber \\
&= -\frac{1}{2} \left [ \epsilon_{ab} \epsilon^{ce} \phi_c(z) \bar{\phi}^a(z) \phi_d(z) \bar{\phi}^b(z) (\sigma_m)_e^{\;\;d} + \epsilon^{ab} \epsilon_{ed} \phi_a(z) \bar{\phi}^c(z) \phi_b(z) \bar{\phi}^d(z) (\sigma_m)_c^{\;\;e} \right ] \nonumber \\
&= -\frac{1}{2} \{ Q, \phi_c \bar{\psi}_d (\epsilon \sigma_m)^{cd} + \psi^c \bar{\phi}^d (\sigma_m \epsilon)_{cd} \} \nonumber \\
&= \{ Q, \psi_m \}. \label{check-part2}
\end{align}
In the second line, we have used the completeness relation
\begin{align}
(\sigma_m)_b^{\;\;a} (\sigma_m)_d^{\;\;c} = 2\delta^c_b \delta^a_d - \delta^a_b \delta^c_d \label{spinor-idr}
\end{align}
which is just like the last line of \eqref{useful-ids1} but for R-symmetry. In the third line, we have used
\begin{align}
\epsilon_{ab} \epsilon^{cd} = \delta_a^c \delta_b^d - \delta_a^d \delta_b^c. \label{epsilon-to-delta}
\end{align}

Our remaining task is to recover the expected transformation of $f$ in \eqref{bmn-identifications}. To do so, we will start from the commutator again and apply \eqref{spinor-idr} right away.
\begin{align}
i [\phi^m, \psi_m] =& -\frac{i}{4} (\sigma^m)_b^{\;\;a} \left [ \phi_a \bar{\phi}^b \phi_c \bar{\psi}_d (\epsilon \sigma_m)^{cd} + \phi_a \bar{\phi}^b \psi^c \bar{\phi}^d (\sigma_m \epsilon)_{cd} - \phi_c \bar{\psi}_d \phi_a \bar{\phi}^b (\epsilon \sigma_m)^{cd} - \psi^c \bar{\phi}^d \phi_a \bar{\phi}^b (\sigma_m \epsilon)_{cd} \right ] \nonumber \\
=& -\frac{i}{2} \epsilon_{bc} (\phi_a \bar{\phi}^b \psi^a \bar{\phi}^c + \psi^a \bar{\phi}^b \phi_a \bar{\phi}^c) + \frac{i}{2} \epsilon^{bc} (\phi_b \bar{\phi}^a \phi_c \bar{\psi}_a + \phi_b \bar{\psi}_a \phi_c \bar{\phi}^a) \nonumber \\
&+ \frac{i}{4} \epsilon_{bc} (\phi_a \bar{\phi}^a \psi^b \bar{\phi}^c - \psi^b \bar{\phi}^c \phi_a \bar{\phi}^a) + \frac{i}{4} \epsilon^{bc} (\phi_a \bar{\phi}^a \phi_b \bar{\psi}_c - \phi_b \bar{\psi}_c \phi_a \bar{\phi}^a) \nonumber \\
=& -\frac{i}{4} \epsilon_{bc} (\phi_a \bar{\phi}^b \psi^a \bar{\phi}^c + \psi^a \bar{\phi}^b \phi_a \bar{\phi}^c + \phi_a \bar{\phi}^b \psi^c \bar{\phi}^a + \psi^b \bar{\phi}^a \phi_a \bar{\phi}^c) \nonumber \\
& + \frac{i}{4} \epsilon^{bc} (\phi_b \bar{\phi}^a \phi_c \bar{\psi}_a + \phi_b \bar{\psi}_a \phi_c \bar{\phi}^a + \phi_b \bar{\phi}^a \phi_a \bar{\psi}_c + \phi_a \bar{\psi}_b \phi_c \bar{\phi}^a) \nonumber \\
=& -\frac{i}{2} (\epsilon_{bc} \psi^a \bar{\phi}^b \phi_a \bar{\phi}^c - \epsilon^{bc} \phi_b \bar{\phi}^a \phi_c \bar{\psi}_a) \nonumber \\
&- \frac{i}{4} \phi_a (\epsilon_{bc} \bar{\phi}^b \psi^c - \epsilon^{bc} \bar{\psi}_b \phi_c) \bar{\phi}^a + \frac{i}{4} \phi_a \bar{\phi}^a (\epsilon^{bc} \phi_b \bar{\psi}_c - \epsilon_{bc} \psi^b \bar{\phi}^c) \nonumber \\
&+ \frac{i}{4} (\epsilon^{bc} \phi_b \bar{\psi}_c - \epsilon_{bc} \psi^b \bar{\phi}^c) \phi_a \bar{\phi}^a - \frac{i}{4} \phi_a (\epsilon_{bc} \bar{\phi}^b \psi^c - \epsilon^{bc} \bar{\psi}_b \phi_c) \bar{\phi}^a \nonumber \\
=& \frac{i}{4} [Q, 2 \psi^a \bar{\psi}_a - \phi_a D \bar{\phi}^a + D \phi_a \bar{\phi}^a] \nonumber \\
=& [Q, f] \label{check-part3}
\end{align}
The middle line uses a basis in which the first two letters and the last two letters are never contracted with each other. The lines directly above and below are seen to match this due to the identities
\begin{align}
\delta^a_b \epsilon_{cd} = \epsilon_{bd} \delta^a_c - \epsilon_{bc} \delta^a_d, \quad \delta^b_a \epsilon^{cd} = \epsilon^{bd} \delta^c_a - \epsilon^{bc} \delta^d_a \label{fierz-id}
\end{align}
which are raised and lowered versions of \eqref{epsilon-to-delta}.

This approach allows us to start from a BPS operator in the BMN sector of ${\cal N}=4$ SYM and lift it to a BPS operator in ABJM theory. Moreover, if the operator is fortuitous in ${\cal N}=4$ SYM, it will be fortuitous in ABJM theory if it is not $Q$-exact under the ABJM supercharge action. This last point raises the question of whether this lift can ever be $Q$-exact in ABJM theory if it is not $Q$-exact in ${\cal N}=4$ SYM. One possibility for establishing this would be to show that the $Q$ action on an ABJM operator that cannot be written in terms of the bilinear letters \eqref{bmn-identifications} can never reproduce something that is. %Using a Python script to look for this kind of behavior we identified the operator
This is false, as demonstrated by
\begin{equation}
    \chi = \Tr(\psi^1 \bar{\phi}^1 \phi_1 \bar{\phi}^2) - \Tr(\bar{\psi}_2 \phi_2 \bar{\phi}^2 \phi_1) - \Tr(\bar{\psi}_1 \phi_2 \bar{\phi}^2 \phi_2),
\end{equation}
which is not made out of the letters \eqref{bmn-identifications} but nevertheless produces
\begin{equation}
    \{Q, \chi\} = \Tr(\phi^1 \phi^2 \phi^3) - \Tr(\phi^2 \phi^1 \phi^3)
\end{equation}
which is.
%While this still does not rule out the non $Q$-exactness of the lift, it illustrates that it could fail.
Since $Q$ might also take non-bilinears to trace relations made out of bilinears, there is a chance that the lift of a given $\mathcal{O}_n$ could be come exact at the sufficiently small values of $N$ which make it closed.
Altogether we leave it as a conjecture that the fortuitous operators will lift to fortuitous operators, which should be not $Q$-exact in ABJM theory and we leave further studying the lift and proving the conjecture to future work.

Under the assumption of this conjecture, it is interesting that we have found a new tower of fortuitous cohomologies without the need for an ABJM analogue of the special properties in \cite{hep-th/0306054,hep-th/0202021}. We have simply shown that the identifications \eqref{bmn-identifications} can be used to lift \eqref{sym-tower} by exploiting an algebraic similarity between the supercharge actions in 3d $\mathcal{N} = 6$ ABJM theory and 4d $\mathcal{N} = 4$ Super Yang-Mills. In the rest of this section, we will see that a few more similarities can be found by pushing this approach further.

\subsection{The superfield approach}

In ${\cal N}=4$ SYM there is an approach that allows for describing the monotone states more easily using a superfield construction \cite{1305.6314,2209.06728}. Specifically, single-trace monotone states have a compact expression in terms of a superfield. Arbitrary monotone operators are then built as arbitrary linear combinations of arbitrary products of these single-graviton states due to the Leibniz rule obeyed by the supercharge $Q$. In this section we briefly review this construction before attempting to generalize it to ABJM theory.

Recall that ${\cal N}=4$ SYM has the following BPS letters
\begin{align}
\phi^n \equiv \Phi^{4n}, \quad \psi_n \equiv -i\Psi_{n+}, \quad \lambda_{\dot{\alpha}} \equiv \bar{\Psi}^4_{\dot{\alpha}}, \quad D_{\dot{\alpha}} \equiv D_{+ \dot{\alpha}}, \quad f \equiv -iF_{++}
\end{align}
where $n = 1, 2, 3$, and $\Phi^{ij}$ are the scalars, $\Psi_{i\alpha}$ are the right-handed spinors and $\bar\Psi^i_{\dot\alpha}$ are the left-handed spinors, all of which are valued in the adjoint representation of ${\rm SU}(N)$ or ${\rm U}(N)$. These are the ``BMN letters'' discussed in the last subsection along with $\lambda_{\dot{\alpha}}$ and $D_{\dot{\alpha}}$ which carry spinor indices. Using the field strength $f$ defined in \eqref{n4-bps-f} ensures that none of the covariant derivatives $D_{\dot{\alpha}}$ need to be anti-symmetrized. Additionally, we can avoid anti-symmetrizing covariant derivatives with $\lambda_{\dot{\alpha}}$ due to the equation of motion
\begin{align}
\epsilon^{\dot{\alpha} \dot{\beta}} D_{\dot{\alpha}} \lambda_{\dot{\beta}} = [\phi^n, \psi_n].
\end{align}
%and $D_{\alpha\dot\alpha}$ is the covariant derivative with bispinor indices.
The first step in the superfield construction is then to define generating functions that repackage all of the symmetrized covariant derivatives of the BPS letters
\begin{align}
& \phi^m(z) \equiv \sum_{n = 0}^\infty \frac{(z^{\dot{\alpha}} D_{\dot{\alpha}})^n}{n!} \phi^m, \quad \psi_m(z) \equiv \sum_{n = 0}^\infty \frac{(z^{\dot{\alpha}} D_{\dot{\alpha}})^n}{n!} \psi_m, \nonumber \\
& f(z) \equiv \sum_{n = 0}^\infty \frac{(z^{\dot{\alpha}} D_{\dot{\alpha}})^n}{n!} f, \quad \lambda(z) \equiv \sum_{n = 0}^\infty \frac{(z^{\dot{\alpha}} D_{\dot{\alpha}})^n}{(n + 1)!} z^{\dot{\beta}} \lambda_{\dot{\beta}}. \label{4d-generating-functions}
\end{align}
With these in hand, we define the $\mathbb{C}^{2|3}$ superfield
\begin{align}
\Psi(z, \theta) = -i[\lambda(z) + 2\theta_n \phi^n(z) + \epsilon^{mnp} \theta_m \theta_n \psi_p(z) + 4\theta_1 \theta_2 \theta_3 f(z)], \label{4d-superfield1}
\end{align}
which obeys $\Psi(0,0)=0$. Derivatives with respect to the Grasmmann coordinates $\theta^i$ extract different BPS letters and derivatives with respect to $z^{\dot\alpha}$ select how many covariant derivatives act on those letters. The ${\cal N}=4$ SYM $Q$ actions on the BPS letters \eqref{n4-qactions} then imply that the superfield transforms as \cite{1305.6314}
\begin{equation}\label{eq:q-action-superfield}
    \{Q,\Psi(z,\theta)\} = \Psi(z,\theta)^2.
\end{equation}
Using $\Psi(z,\theta)$ it is possible to write the single-graviton operators as
\begin{equation}\label{eq:single-graviton-n=4}
\left.
\partial_{z_+}^{p_1}
\partial_{z_-}^{p_2}
\partial_{\theta_1}^{q_1}
\partial_{\theta_2}^{q_2}
\partial_{\theta_3}^{q_3}
\,
\operatorname{Tr}\!\left[
(\partial_{z_+}\Psi)^{k_1}
(\partial_{z_-}\Psi)^{k_2}
(\partial_{\theta_1}\Psi)^{m_1}
(\partial_{\theta_2}\Psi)^{m_2}
(\partial_{\theta_3}\Psi)^{m_3}
\right]
\right|_{{\cal Z}=0}
\end{equation}
where ${\cal Z} = (z,\theta)$. It is straightforward to show that \eqref{eq:q-action-superfield} implies that all of these single trace operators are $Q$-closed without using trace relations. For this, note that the quadratic $Q$ action implies that
\begin{equation}
    \left\{Q,\dfrac{\partial \Psi}{\partial{\cal Z}^A}\right\} = 2\Psi \dfrac{\partial\Psi}{\partial{\cal Z}^A}.
\end{equation}
Hence, when we set ${\cal Z}=0$, the $Q$ action on this derivative is zero. The single trace \eqref{eq:single-graviton-n=4} has a long string of single derivatives of $\Psi(z,\theta)$ with respect to the various coordinates ${\cal Z}^A$. As such, when the $Q$ action hits any of these single derivatives, it makes the contribution vanish. As anticipated, this involves no trace relations at all meaning that all instances of \eqref{eq:single-graviton-n=4} are monotone. Their products will be multi-trace monotone operators, interpreted here as multi-graviton operators.

To dispel potential confusions, we note that the BPS superfield $\Psi(z, \theta)$ for $\mathcal{N} = 4$ SYM is inherently on-shell. It is therefore unrelated to the more well known superfield formalism which is used to construct supersymmetric Lagrangians. In fact, superfields of the latter type do not exist for off-shell $\mathcal{N} = 4$ SUSY. This follows from the no-go theorems in \cite{rt82,rt83} pertaining to SYM and SUGRA theories in diverse dimension.

\subsection{An ABJM superfield for billinears}

We previously saw that the composite operators in \eqref{bmn-identifications} can be used to emulate a subsector of ${\cal N}=4$ SYM. This agreement can be extended to a subsector which includes covariant derivatives and we will show this using a BPS superfield for ABJM theory. Although our superfield will be able to accommodate arbitrarily many covariant derivatives, we will see that they need to be distributed in special ways.\footnote{In the amplitude context, \cite{1003.6120} formulated a superfield for ABJM theory which had other limitations. Specifically, only part of the R-symmetry could be made manifest.}

The first step in the superfield construction is to assemble the covariant derivatives of the BPS letters into generating functions such as it is done in equation \eqref{4d-generating-functions} for ${\cal N}=4$ SYM. There is immediately a problem in trying to do that for ABJM theory.  Consider the $Q$ action on a quantity with arbitrarily many covariant derivatives. It is simplest to start with $D^n \phi_a$.
\begin{align}
& [Q, D^n \phi_a] = (\epsilon^{bc} \phi_b \bar{\psi}_c - \epsilon_{bc} \psi^b \bar{\phi}^c) D^{n - 1} \phi_a - D^{n - 1} \phi_a (\epsilon_{bc} \bar{\phi}^b \psi^c - \epsilon^{bc} \bar{\psi}_b \phi_c) + D[Q, D^{n - 1} \phi_a] \\
&= \sum_{q = 0}^{n - 1} \binom{n}{n - 1 - q} \left [ D^q (\epsilon^{bc} \phi_b \bar{\psi}_c - \epsilon_{bc} \psi^b \bar{\phi}^c) D^{n - 1 - q} \phi_a  - D^{n - 1 - q} \phi_a D^q (\epsilon_{bc} \bar{\phi}^b \psi^c - \epsilon^{bc} \bar{\psi}_b \phi_c) \right ] \nonumber
\end{align}
The right-hand side is a double sum once we expand $D^q$ using the Leibniz rule. If we now sum up the $(zD)^n \phi_a$ terms on the left-hand side, weighted by some coefficients, we define the $Q$ action on a generating function $\phi_a(z)$. But then we would like the triple sum on the right-hand side to become a cubic polynomial in some other generating functions. Some experimentation shows that there is no way to choose coefficients for $\bar{\phi}^a, \psi^a, \bar{\psi}_a$ which make this true. We can find a way out by introducing a new generating function on which the superfield will be allowed to depend. For that matter, we define
\begin{align}
\lambda\bar{\lambda}(z) \equiv i \sum_{n = 0}^\infty \frac{z(zD)^n}{(n + 1)!} (\epsilon^{ab} \phi_a \bar{\psi}_b - \epsilon_{ab} \psi^a \bar{\phi}^b), \quad \bar{\lambda}\lambda(z) \equiv i \sum_{n = 0}^\infty \frac{z(zD)^n}{(n + 1)!} (\epsilon_{ab} \bar{\phi}^a \psi^b - \epsilon^{ab} \bar{\psi}_a \phi_b) \label{bilinears1}
\end{align}
and allow supersymmetry transformations to depend explicitly on these bilinears. The generating functions can then be defined analogously to ${\cal N}=4$ SYM.
\begin{align}
\phi_a(z) \equiv \sum_{n = 0}^\infty \frac{(zD)^n}{n!} \phi_a, \quad \bar{\phi}^a(z) \equiv \sum_{n = 0}^\infty \frac{(zD)^n}{n!} \bar{\phi}^a \label{unilinears} \\
\psi^a(z) \equiv \sum_{n = 0}^\infty \frac{(zD)^n}{n!} \psi^a, \quad \bar{\psi}_a(z) \equiv \sum_{n = 0}^\infty \frac{(zD)^n}{n!} \bar{\psi}_a \nonumber
\end{align}
The $Q$ action on the complete set of generating functions is then given by
\begin{align}
\{Q, \lambda\bar{\lambda}(z)\} &= -i \lambda\bar{\lambda}(z)^2 \label{q-actions1} \\
\{Q, \bar{\lambda}\lambda(z)\} &= -i \bar{\lambda}\lambda(z)^2 \nonumber \\
[Q, \phi_a(z)] &= -i [ \lambda\bar{\lambda}(z) \phi_a(z) - \phi_a(z) \bar{\lambda}\lambda(z) ] \nonumber \\
[Q, \bar{\phi}^a(z)] &= -i [ \bar{\lambda}\lambda(z) \bar{\phi}^a(z) - \bar{\phi}^a(z) \lambda\bar{\lambda}(z) ] \nonumber \\
\{Q, \psi^a(z)\} &= -i [ \lambda\bar{\lambda}(z) \psi_a(z) + \psi_a(z) \bar{\lambda}\lambda(z) ] + \epsilon^{bc} \phi_b(z) \bar{\phi}^a(z) \phi_c(z) \nonumber \\
\{Q, \bar{\psi}_a(z)\} &= -i [ \bar{\lambda}\lambda(z) \bar{\psi}_a(z) + \bar{\psi}_a(z) \lambda\bar{\lambda}(z) ] + \epsilon_{bc} \bar{\phi}^b(z) \phi_a(z) \bar{\phi}^c(z). \nonumber
\end{align}
%where we need to remember the constraint
%\begin{align}
%\partial_z \lambda \bar{\lambda}(z) = i [ \epsilon^{ab} \phi_a(z) \bar{\psi}_b(z) - \epsilon_{ab} \psi^a(z) \bar{\phi}^b(z) ] %\equiv \phi \cdot \bar{\psi}(z) - \psi \cdot \bar{\phi}(z) \nonumber \\
%, \quad
%\partial_z \bar{\lambda} \lambda(z) = i [ \epsilon_{ab} \bar{\phi}^a(z) \psi^b(z) - \epsilon^{ab} \bar{\psi}_a(z) \phi_b(z) ]. %\equiv \bar{\phi} \cdot \psi(z) - \bar{\psi} \cdot \phi(z).
%\end{align}
%Whether or not we use \eqref{bilinears1}, there is not an obvious way to assemble the generating functions \eqref{unilinears} into a superfield. In the full ABJM theory, \cite{blm10} had to make some of the R-symmetry non-manifest in order to construct a superfield. Things become nicer if we content ourselves with describing only a subset of the states built out of BPS letters. We can see this by using bilinears everywhere which makes the problem more similar to $\mathcal{N} = 4$ SYM.

The importance of $\lambda \bar{\lambda}(z)$ and $\bar{\lambda} \lambda(z)$ in these transformation laws suggests that our superfield should have one of these (chosen here to be $\lambda \bar{\lambda}(z)$ without loss of generality) as its leading component. It is then necessary to ensure that the rest of the components are bilinears as well. From the form of \eqref{unilinears}, it is straightforward to verify that the generating function for a product is the product of the generating functions. It is therefore convenient to write
\begin{align}
\{Q, \lambda \bar{\lambda}(z)\} &= -i \lambda \bar{\lambda}(z)^2 \label{q-actions3} \\
\{Q, \phi_a \bar{\psi}_b(z)\} &= -i \{\lambda\bar{\lambda}(z), \phi_a \bar{\psi}_b(z)\} + \epsilon_{cd} \phi_a(z) \bar{\phi}^c(z) \phi_b(z) \bar{\phi}^d(z) \nonumber \\
\{Q, \psi^a \bar{\phi}^b(z)\} &= -i \{\lambda\bar{\lambda}(z), \psi^a \bar{\phi}^b(z)\} + \epsilon^{cd} \phi_c(z) \bar{\phi}^a(z) \phi_d(z) \bar{\phi}^b(z) \nonumber \\
[Q, \psi^a \bar{\psi}_b(z)] &= -i [\lambda\bar{\lambda}(z), \psi^a \bar{\psi}_b(z)] + \epsilon^{cd} \phi_c(z) \bar{\phi}^a(z) \phi_d(z) \bar{\psi}_b(z) - \epsilon_{cd} \psi^a(z) \bar{\phi}^c(z) \phi_b(z) \bar{\phi}^d(z) \nonumber \nonumber \\
[Q, \phi_a \bar{\phi}^b(z)] &= -i [\lambda\bar{\lambda}(z), \phi_a \bar{\phi}^b(z)] \nonumber
\end{align}
which follows from \eqref{q-actions1}. Clearly, the relevant actions under $Q$ are now quadratic instead of cubic. We therefore seek a superfield which satisfies
\begin{equation}\label{desired-q-action}
    \{Q,\Psi\bar\Psi(z, \theta)\} = \Psi\bar\Psi(z, \theta)^2.
\end{equation}
Since this is also the $Q$ action in ${\cal N}=4$ SYM, the higher components should be generating functions defined in complete analogy with \eqref{bmn-identifications}.\footnote{With a single Grassmann variable, it is possible to use $\Psi \bar{\Psi}(z, \theta) = -i [\lambda \bar{\lambda}(z) + \theta \phi_a(z) \phi^a(z)]$ as a toy model realizing \eqref{desired-q-action}. We would like to do better since this superfield describes too little of the Hilbert space. It is also possible to show that \eqref{desired-q-action} cannot be satisfied with two Grassmann variables. This agrees with the intuition that there are no ``special'' $2 \times 2$ matrices which come in sets of two.}
\begin{comment}
Mirrorring the ${\cal N}=4$ SYM case, we want to assemble a superfield $\Psi\bar\Psi({\cal Z})$ out of these billinears and have the property
\begin{equation}\label{desired-q-action}
    \{Q,\Psi\bar\Psi({\cal Z})\} = \Psi\bar\Psi({\cal Z})^2.
\end{equation}
This is easy to achieve for a superfield with two degrees of freedom
\begin{align}
& \Psi \bar{\Psi}(z, \theta) = -i \left [ \lambda\bar{\lambda}(z) + \theta \phi \cdot \bar{\phi}(z) \right ] \label{superfield2} \\
& \{Q, \Psi \bar{\Psi}(z, \theta)\} = \Psi \bar{\Psi}(z, \theta)^2, \quad \Psi \bar{\Psi}(0, 0) = 0. \nonumber
\end{align}
but this superfield describes too little of the Hilbert space. One can show that with two degrees os freedom there is no superfield that can be built leading to the quadratic $Q$ action \eqref{desired-q-action}. Nevertheless, with 8 degrees of freedom it is indeed possible.
\end{comment}
Hence, the natural ansatz to propose is
%\begin{align}
%\Psi \bar{\Psi}(z, \theta) = -i[\lambda(z) + 2\theta_n \phi^n(z) + \epsilon^{mnp} \theta_m \theta_n \psi_p(z) + 4\theta_1 \theta_2 \theta_3 f(z)], \label{4d-superfield2}
%\end{align}
\begin{align}
\Psi \bar{\Psi}(z, \theta) =& -i \lambda \bar{\lambda}(z) - i \theta_n \phi_a \bar{\phi}^b(z) (\sigma^n)_b^{\;\;a} + \frac{i}{2} \epsilon^{mnp} \theta_m \theta_n [\phi_a \bar{\psi}_b(z) (\epsilon \sigma_p)^{ab} + \psi^a \bar{\phi}^b(z) (\sigma_p \epsilon)_{ab}] \nonumber \\
&+ \theta_1 \theta_2 \theta_3 [2\psi^a \bar{\psi}_a(z) - \phi_a D \bar{\phi}^a(z) + D\phi_a \bar{\phi}^a(z)]. \label{3d-superfield}
\end{align}
Again, $\Psi \bar{\Psi}(0, 0) = 0$ is the only constraint which needs to be imposed.\footnote{Even though $\epsilon^{ab} \phi_a \bar{\psi}_b(z) - \epsilon_{ab} \psi^a \bar{\phi}^b(z) = -i \lambda \bar{\lambda}(z)$ and $D \phi_a \bar{\phi}^b(z) + \phi_a D \bar{\phi}^b(z) = \partial_z \phi_a \bar{\phi}^b(z)$ must hold, these equations do not introduce additional constraints because not all of their terms appear in \eqref{3d-superfield}.}
To see that squaring this superfield results in the right $Q$ action, we note that the $O(\theta)$, $O(\theta^2)$ and $O(\theta^3)$ components are each transformed into a sum of two terms. The first term consists of (anti-)commutators of the original generating function with $\lambda\bar{\lambda}(z)$ --- this is indeed part of the $Q$ action. As can be seen in \eqref{q-actions3}, it is the part which comes from $Q$ hitting covariant derivatives. The second term is the one from \eqref{n4-qactions} with individual letters promoted to generating functions. The proof that this provides the rest of the $Q$ action (the part where $Q$ does not hit covariant derivatives) follows by repeating \eqref{check-part2} and \eqref{check-part3} \textit{mutatis mutandis}.
%where $b(z)$ is a boson of dimension two that has to be determined by imposing \eqref{desired-q-action}. As argued in Appendix \ref{}, this procedure determines $b(z)$ to be
%\begin{align}
%b(z) = 8 \psi \cdot \bar{\psi}(z) -4i \phi \cdot D \bar{\phi}(z) + 4i D \phi \cdot \bar{\phi}(z).
%\end{align}

In summary, we have shown that a sector of $\frac{1}{12}$-BPS operators in ABJM is described by a $(1|3)$ superfield which obeys the same equations as the $(2|3)$ superfield describing all $\frac{1}{16}$-BPS operators in $\mathcal{N} = 4$ SYM. The mapping between generating functions is
\begin{align}\label{generating-functions-mapping-lift}
\lambda(z) &\mapsto \lambda \bar{\lambda}(z) \\
\phi^n(z) &\mapsto \frac{1}{2} \phi_a \bar{\phi}^b(z) (\sigma^n)_b^{\;\;a} \nonumber \\
\psi_n(z) &\mapsto -\frac{1}{2} \left [ \phi_a \bar{\psi}_b(z) (\epsilon \sigma_n)^{ab} + \psi^a \bar{\phi}^b(z) (\sigma_n \epsilon)_{ab} \right ] \nonumber \\
f(z) &\mapsto \frac{i}{4} \left [ 2 \psi^a \bar{\psi}_a(z) - \phi_a D \bar{\phi}^a(z) + D \phi_a \bar{\phi}^a(z) \right ] \nonumber
\end{align}
where we remind the reader of the definitions \eqref{unilinears}.
\begin{comment}
\begin{align}
\phi^i \equiv \Phi^{4i}, \quad \psi_i \equiv -i \Psi_{i+}, \quad \lambda_{\dot{\alpha}} \equiv \bar{\Psi}^4_{\dot{\alpha}}, \quad f = -iF_{++} \\
\phi_a \equiv 2 \phi_a^L, \quad \bar{\psi}_a \equiv 2 \bar{\psi}^L_{a +}, \quad \bar{\phi}^a \equiv 2 \bar{\phi}^a_R, \quad \psi^a \equiv 2 \psi^a_{R +}. \nonumber
\end{align}
This correspondence might not seem like much since $z$ has one component in the 3d case and two components in the 4d case. However, it is worth remembering that $z$ is there to keep track of covariant derivatives which many known quantum black hole operators in $\mathcal{N} = 4$ SYM do not have. This goes for the $\Delta = \frac{19}{2}$ cohomology class found in \cite{cl22} where an explicit representative was later written in \cite{bmv23}. For more operators with this property, see \cite{cckll23}. Since these results come from setting $z = 0$, we should be able to do the same thing in ABJM theory and find some quantum black holes for free.
\end{comment}
It will be interesting to see if this superfield allows us go beyond \eqref{sym-tower} and lift more fortuitous BPS states in ${\rm U}(N)$ ${\cal N}=4$ SYM to ${\rm U}(N)_k \times {\rm U}(N)_{-k}$ ABJM.
%In particular, the $Q$ action on the ABJM BPS letters on the right-hand side of \eqref{generating-functions-mapping-lift} reproduces the ${\cal N}=4$ SYM $Q$ action on the left-hand side.
It is again helpful to note that there is a fortuitous lift \eqref{subtract-trace} which goes from ${\rm SU}(N)$ to ${\rm U}(N)$ ${\cal N}=4$ SYM.

\section{Conclusion}\label{sec:conclusion}

In this paper, we have investigated fortuitous operators in ABJM theory at large Chern-Simons level using two complementary approaches. First, by adapting the brute-force enumeration algorithm developed in ${\cal N}=4$ Super Yang-Mills theory \cite{2209.06728}, we identified the BPS cohomology classes in \eqref{zfort-n2} for $N = 2$, \eqref{zfort-n3} for $N = 3$ and \eqref{zfort-n4} for $N = 4$. Altogether, this amounts to 244 low-lying fortuitous operators in ABJM. Second, by isolating adjoint-valued bilinear combinations of ABJM BPS letters whose supersymmetry transformations coincide with those used in \cite{0803.4183,1305.6314}, we established a precise map between the non-derivative sector of ${\cal N}=4$ SYM and a somewhat non-standard sector of ABJM theory. This correspondence allows the BMN-type fortuitous operators studied in \cite{2304.10155,2312.16443,2412.08695,gklm25} to be lifted directly to a counterpart in ABJM. For $N = 2$,
-%this yields an infinite tower of fortuitous cohomology classes at high levels from \eqref{sym-tower}, complementing the many that we found at low levels.
-one can complement the low-level search using the operators \eqref{sym-tower} at arbitrarily high levels --- we currently have no reason to believe that the resulting cohomology classes are $Q$-exact, which is the only way for them to not be fortuitous.

If the conjecture of \cite{cl24} relating fortuity to black holes is correct, the Bekenstein-Hawking entropy requires there to be exponentially more fortuitous operators than monotone operators at large $N$. When $N$ is taken to be small, however, there is no guarantee that this hierarchy will still be visible in the low-lying Hilbert space. As such, the abundance of fortuitous operators found here and in \cite{bsvz25} is an obvious advantage to working with the prototypical holographic CFT in 3d instead of 4d. At the same time, ABJM shares a pleasing property with ${\cal N}=4$ SYM --- it admits a weakly-coupled bulk dual which includes propagating gravitons. Together, these facts suggest that ABJM theory will provide a highly desirable setting for studying fortuity in the future. Since the BPS partition functions we quoted were found on a single laptop, it is likely that they can be computed to much higher orders. This is certainly true for $N = 2$, as preliminary testing has shown, and we suspect it will be possible to see statistical evidence for black holes without prohibitively large values of $N$.

For these searches to proceed in earnest, it will be important to construct the actual BPS representatives of the fortuitous cohomology classes. Indeed, it is well-known that there is exactly one operator in each $Q$-cohomology class that is BPS, namely one that is annihilated by $\Delta = \{Q,Q^\dagger\}$. Although our analysis was able to obtain \eqref{abjm-fort2}, \eqref{abjm-fort-singlet} and \eqref{abjm-fort-triplet}, it was not enough to single out the BPS representative. This requires one to diagonalize the two-loop Hamiltonian \eqref{ham-2loop} and look at the zero eigenspace. For the analogous one-loop Hamiltonian of ${\cal N = 4}$ SYM, this was undertaken in the simultaneous works \cite{bmv23,2306.04673}. In \cite{2306.04673}, the Hamiltonian was computed at fixed small values of $N$ by making use of carefully normalized inner products of BPS letters. The distribution of its eigenvalues showed evidence of a gap between classically BPS and one-loop BPS operators, reminiscent of the $O(N^{-2})$ gap obtained in \cite{2203.01331} using the gravitational path integral. For our purposes, it is interesting to note that \cite{2412.03697} has recently found an $O(N^{-3/2})$ gap by studying near-extremal black holes in the gauged supergravity background dual to ABJM theory at $k = 1$. In \cite{bmv23}, the focus was instead on following eigenstates to non-integer values of $N$ where the theory becomes non-unitary and operator dimensions are allowed to acquire imaginary parts. It was found that the fortuitous state corresponding to \eqref{sym-fort} remains in a unitary representation for $N > 2$ but becomes unprotected by ``eating'' a monotone state. Given the high dimensionalities involved, it was beneficial to express the Hamiltonian using the remarkable form developed in \cite{hep-th/0407277}. Similar technology exists for ABJM theory \cite{0901.0411} and we also believe that analytically continuing BPS states with the small charges seen here will be doable without resorting to integrability methods.

It would also be interesting to find truncations of ABJM theory which allow one to identify fortuitous operators without relying on results from other theories. An example would be a sector where the multi-graviton index can be computed exactly at finite $N$. Unfortunately, there are hardly any of these in ${\cal N} = 4$ SYM.
% Say that the Coulomb branch cohomology is now know not to be the full monotone cohomology?
The Schur sector, known for its role in the SCFT/VOA correspondence \cite{1312.5344}, is one example but it is believed to contain no fortuitous operators at all \cite{2310.20086}. So far, it is only the BMN or non-derivative sector which has enabled this type of progress \cite{2304.10155}.\footnote{After making this truncation, \cite{gklm25} showed that there is further ground to be gained by restricting to R-symmetry singlets. This results in a simple monotone index for $N = 2,3,4$ while still leaving many fortuitous states intact.} Even then, the correspondence between the BMN matrix model \cite{hep-th/0202021} and a subsector of ${\cal N} = 4$ SYM is broken by quantum corrections. Since these are now known to affect the $Q$-cohomology \cite{cl25a,cl25b,bk25}, the question of whether the operators in \eqref{sym-tower} are really $\frac{1}{16}$-BPS deserves additional scrutiny. In ABJM theory, the fate of naively $\frac{1}{12}$-BPS operators beyond one loop remains unexplored and we consider this an important direction for future work.

\section*{Acknowledgements}
We are grateful to Pinaki Banerjee, Kasia Budzik, Eduardo Casali, Qi Chen, Matheus Curado, Davide Gaiotto, Zhongjie Huang, Shota Komatsu, Adrian Lopez-Raven, Gabriel Menezes, Harish Murali, Matthew Roberts and Pedro Vieira for stimulating discussions. Research at Perimeter Institute is supported in part by the Government of Canada through the Department of Innovation, Science and Economic Development and by the Province of Ontario through the Ministry of Colleges and Universities. L. G. acknowledges FAPESP Foundation through the grant 2023/04415-2 for support.

\appendix

\section{Conventions for the Superconformal Algebra in $d=3$}\label{app:superconformal-algebra}

In this appendix we establish our conventions for the superconformal algebra in $d=3$, which are the same as the ones in \cite{bcjp20}. Firstly we have the $d=3$ conformal algebra
% B.1
\begin{equation}
    \begin{split}
    [M_{\alpha}^{\phantom{\alpha}\beta},M_{\gamma}^{\phantom{\gamma}\delta}] &= -\delta_{\alpha}^{\phantom{\alpha}\delta}M_{\gamma}^{\phantom{\gamma}\beta} + \delta_{\gamma}^{\phantom{\gamma}\beta}M_{\alpha}^{\phantom{\alpha}\delta},\\
    [M_{\alpha}^{\phantom\alpha\beta},P_{\gamma\delta}] &= \delta_\gamma^{\phantom\gamma \beta}P_{\alpha\delta} + \delta_\delta^{\phantom\delta\beta} P_{\alpha\gamma}-\delta_\alpha^{\phantom{\alpha}\beta} P_{\gamma\delta},\\
    [M_\alpha^{\phantom\alpha\beta},K^{\gamma\delta}] &= - \delta_\alpha^{\phantom\alpha\gamma}K^{\beta\delta} -\delta_{\alpha}^{\phantom\alpha\delta}K ^{\beta\gamma} + \delta_\alpha^{\phantom\alpha\beta}K^{\gamma\delta},\\
    [K^{\alpha\beta},P_{\gamma\delta}] &= 4\delta^{(\alpha}_{(\gamma}M^{\beta)}_{\delta)} + 4\delta^\alpha_{(\gamma}\delta^\beta_{\delta)}D.
    \end{split}
\end{equation}
The associated adjoint relations between the generators are given by
\begin{equation}
    (P_{\alpha\beta})^\dagger = K^{\alpha\beta}, \quad (K^{\alpha\beta})^\dagger = P_{\alpha\beta}, \quad (M_\alpha^{\phantom\alpha\beta})^\dagger = M_{\beta}^{\phantom\beta\alpha}, \quad D^\dagger = D.
\end{equation}
The R-symmetry generators obey an $\mathfrak{so}({\cal N})$ algebra
\begin{equation}
    \begin{split}
        [R_{rs}, R_{tu}] 
&= i \big( 
\delta_{st} R_{ru} 
- \delta_{su} R_{rt} 
- \delta_{rt} R_{su} 
+ \delta_{ru} R_{st} 
\big).
    \end{split}
\end{equation}
The commutators involving bosonic and fermionic generators are
%$Q_{\alpha r}$ and the $\mathfrak{so}(2,3)\oplus \mathfrak{so}({\cal N})$ generators are
\begin{equation}
    \begin{aligned}
        [M_{\alpha}^{\phantom{\alpha}\beta},Q_{\gamma r}] &= \delta_\gamma^{\phantom\gamma\beta}Q_{\alpha r} - \tfrac{1}{2}\delta_\alpha^{\phantom\alpha\beta}Q_{\gamma r}, &\qquad [M_{\alpha}^{\phantom\alpha\beta},S^\gamma_{\phantom\gamma r}] &= -\delta_\alpha^{\phantom\alpha\gamma}S^\beta_{\phantom\beta r}+\tfrac{1}{2}\delta_\alpha^{\phantom\alpha\beta}S^\gamma_{\phantom\gamma r},\\
        [K^{\alpha\beta},Q_{\gamma r}] &=-i(\delta_\gamma^{\phantom\gamma\alpha}S^\beta_{\phantom\beta r}+\delta_\gamma^{\phantom\gamma\beta}S^\alpha_{\phantom\alpha r}), &\qquad [P_{\alpha\beta},S^{\gamma}_{\phantom\gamma r}] &= -i(\delta_\alpha^{\phantom\alpha\gamma}Q_{\beta r}+\delta_\beta^{\phantom\beta\gamma}Q_{\alpha r}),\\
        [D,Q_{\alpha r}] &= \tfrac{1}{2}Q_{\alpha r}, &\qquad [D,S^\alpha_{\phantom\alpha r}] &= - \tfrac{1}{2}S^\alpha_{\phantom\alpha r},\\
        [R_{rs},Q_{\alpha t}] &= i (\delta_{rt}Q_{\alpha s}-\delta_{st}Q_{\alpha r}), &\qquad [R_{rs},S^\alpha_{\phantom\alpha t}] & = i(\delta_{rt}S^\alpha_{\phantom\alpha s}-\delta_{st}S^\alpha_{\phantom\alpha r}),
    \end{aligned}
\end{equation}
\begin{comment}
while the ones involving the $S^\alpha_{\phantom\alpha r}$ and the $\mathfrak{so}(2,3)\oplus \mathfrak{so}({\cal N})$ generators are
\begin{equation}
    \begin{split}
        [M_{\alpha}^{\phantom\alpha\beta},S^\gamma_{\phantom\gamma r}] &= -\delta_\alpha^{\phantom\alpha\gamma}S^\beta_{\phantom\beta r}+\tfrac{1}{2}\delta_\alpha^{\phantom\alpha\beta}S^\gamma_{\phantom\gamma r},\\       [P_{\alpha\beta},S^{\gamma}_{\phantom\gamma r}] &= -i(\delta_\alpha^{\phantom\alpha\gamma}Q_{\beta r}+\delta_\beta^{\phantom\beta\gamma}Q_{\alpha r}),\\
[D,S^\alpha_{\phantom\alpha r}] &= - \tfrac{1}{2}S^\alpha_{\phantom\alpha r},\\
[R_{rs},S^\alpha_{\phantom\alpha t}] & = i(\delta_{rt}S^\alpha_{\phantom\alpha s}-\delta_{st}S^\alpha_{\phantom\alpha r})
    \end{split}
\end{equation}
\end{comment}
while the anti-commutators among the $Q_{\alpha r}$ and $S^\beta_{\phantom\beta s}$ are
\begin{equation}
    \begin{split}
        \{Q_{\alpha r},Q_{\beta s}\} &= 2\delta_{rs}P_{\alpha \beta},\\
\{S^\alpha_{\phantom\alpha r}, S^\beta_{\phantom\beta s}\} &= -2\delta_{rs}K^{\alpha\beta},\\
\{Q_{\alpha r},S^\beta_{\phantom\beta s}\} &= 2i\left[\delta_{rs}(M_\alpha^{\phantom\alpha\beta}+\delta_\alpha^{\phantom\alpha\beta}D)-i\delta_\alpha^{\phantom\alpha\beta}R_{rs}\right].
    \end{split}
\end{equation}
Finally, the BPZ adjoint relations for the supercharges and R-symmetry generators are
\begin{equation}
        (Q_{\alpha r})^\dagger = -i S^\alpha_{\phantom\alpha r}, \quad (S^\alpha_{\phantom\alpha r})^\dagger = -i Q_{\alpha r}, \quad (R_{rs})^\dagger = R_{rs}.
\end{equation}

\section{Centralizer Calculation}\label{app:centralizer-calculation}

In this appendix, we explain the calculation of the centralizer ${\cal C}(\{Q,Q^\dagger\})$ which comes from studying (anti-)commutators. As for bosonic generators, we can see that $R_{12}$ generates a $\mathfrak{u}(1)$, whereas the remaining $R_{ij}$ generate an $\mathfrak{so}(4)$. Using the identifications
\begin{equation}\label{eq:sl2-u1-gen-centralizer}
    {\cal H} = D - M_-^{\phantom - - },\quad P = \frac{1}{2}P_{++},\quad K = -\frac{1}{2}K^{++},\quad C = D+M_{-}^{\phantom - -},
\end{equation}
the algebra
\begin{equation}
    \begin{aligned}
        [M_-^{\phantom --},P_{++}] &=-P_{++}, &\qquad [M_-^{\phantom --},K^{++}] &=K^{++},\\
        [D,P_{++}] &= P_{++}, &\qquad [D,K^{++}] &= - K^{++},\\
        [P_{++},K^{++}] &=4M_-^{\phantom--}-4D,
    \end{aligned}
\end{equation}
satisfied by the conformal generators, becomes
\begin{equation}
    \begin{split}
        [{\cal H},P] &= 2P, \quad [{\cal H},K] = -2K, \quad [P,K] = {\cal H}, \\
        [C,{\cal H}] &= [C,P] = [C,K] = 0.
    \end{split}
\end{equation}
Therefore the bosonic part of ${\cal C}(\{Q,Q^\dagger\})$ is
\begin{equation}
    \mathfrak{sl}(2)\oplus \mathfrak{u}(1)\oplus \mathfrak{u}(1)\oplus \mathfrak{so}(4).
\end{equation}

The fermionic part of ${\cal C}(\{Q,Q^\dagger\})$ is generated by $Q, Q^\dagger$, $Q_{+r}$ and $S^{+}_{\phantom + r}$ with $r\geq 3$. These satisfy the commutation relations
\begin{equation}
    \begin{aligned}
        [M_{-}^{\phantom{-}-},Q_{+ r}] &= - \tfrac{1}{2}Q_{+ r}, &\qquad [M_{-}^{\phantom--},S^+_{\phantom+ r}] &=\tfrac{1}{2}S^+_{\phantom+ r},\\
        [K^{++},Q_{+ r}] &=-2iS^+_{\phantom+r}, &\qquad [P_{++},S^{+}_{\phantom+ r}] &= -2iQ_{+r},\\
        [D,Q_{+ r}] &= \tfrac{1}{2}Q_{+ r}, &\qquad [D,S^+_{\phantom\alpha r}] &= - \tfrac{1}{2}S^+_{\phantom\alpha r},
    \end{aligned}
\end{equation}
and
\begin{equation}
    \begin{aligned}
        [R_{rs},Q_{\alpha t}] &= i (\delta_{rt}Q_{\alpha s}-\delta_{st}Q_{\alpha r}), &\qquad [R_{rs},S^+_{\phantom+ t}] & = i(\delta_{rt}S^+_{\phantom+ s}-\delta_{st}S^+_{\phantom+ r}),\\
	[R_{12},Q_{+ r}] &= 0, &\qquad [R_{12},S^+_{\phantom+ r}] &=0,
    \end{aligned}\label{commutator-set}
\end{equation}
along with the anti-commutation relations
\begin{equation}
    \begin{aligned}
        \{Q_{+ r},Q_{+ s}\} &= 2\delta_{rs}P_{++}, \qquad \{S^+_{\phantom+ r}, S^+_{\phantom+ s}\} = -2\delta_{rs}K^{++},\\
	\{Q_{+ r},S^+_{\phantom+ s}\} &= 2i\left[\delta_{rs}(D + M_+^{\phantom++})-iR_{rs}\right] \\
	&= 2i\left[\delta_{rs}(D - M_-^{\phantom--})-iR_{rs}\right].
    \end{aligned}\label{anti-commutator-set}
\end{equation}
We have recognized that the last line of \eqref{commutator-set} is zero because $r \geq 3$ in $Q_{+r}$ and $S^+_{\phantom+ r}$. We have also used $M_{+}^{\phantom++} = -M_-^{\phantom - -}$ in \eqref{anti-commutator-set}.
Recasting all commutators in terms of the $\mathfrak{sl}(2)$ generators defined before on \eqref{eq:sl2-u1-gen-centralizer} leads to
\begin{equation}
    \begin{aligned}
        [{\cal H},Q_{+ r}] &= Q_{+ r}, &\qquad [{\cal H},S^+_{\phantom+ r}] &=-S^+_{\phantom+ r},\\
\{Q_{+ r},Q_{+ s}\} &= 4\delta_{rs}P, &\qquad \{S^+_{\phantom+ r}, S^+_{\phantom+ s}\} &= 4\delta_{rs}K,\\
[R_{rs},Q_{\alpha t}] &= i (\delta_{rt}Q_{\alpha s}-\delta_{st}Q_{\alpha r}), &\qquad [R_{rs},S^+_{\phantom+ t}] & = i(\delta_{rt}S^+_{\phantom+ s}-\delta_{st}S^+_{\phantom+ r}),\\
\{Q_{+ r},S^+_{\phantom+ s}\} &= 2i\left(\delta_{rs}{\cal H}-iR_{rs}\right),
    \end{aligned}
\end{equation}
where we only listed the non-trivial ones, noting that $C$ and $R_{12}$ commute with all the $Q_{+r}$ and $S^+_{\phantom+ r}$.

Now we finally turn to $Q$ and $Q^\dagger$. The commutators
\begin{equation}
    \begin{aligned}
        [M_{-}^{\phantom{-}-},Q_{- r}] &= \tfrac{1}{2}Q_{- r}, &\qquad [M_{-}^{\phantom--},S^-_{\phantom- r}] &= -\tfrac{1}{2}S^-_{\phantom{-}r},\\
        [K^{++},Q_{-r}] &=0, &\qquad [P_{++},S^{-}_{\phantom- r}] &= 0,\\
        [D,Q_{- r}] &= \tfrac{1}{2}Q_{- r}, &\qquad [D,S^-_{\phantom- r}] &= - \tfrac{1}{2}S^-_{\phantom- r}
    \end{aligned}
\end{equation}
for $r\in \{1,2\}$ become
\begin{equation}
    \begin{split}
        & [{\cal H},Q] = [P,Q] = [K,Q] = 0, \quad [C,Q] = Q,\\
	& [{\cal H},Q^\dagger] = [P,Q^\dagger] = [K,Q^\dagger] = 0, \quad [C,Q^\dagger] = -Q^\dagger
    \end{split}
\end{equation}
after using \eqref{eq:sl2-u1-gen-centralizer} and the fact that $Q, Q^\dagger = Q_{-1} \pm i Q_{-2}$. When it comes to R-symmetry, we can take $r,s\notin \{1,2\}$ and $t\in \{1,2\}$ to write
\begin{equation}
    \begin{aligned}
                [R_{12},Q_{-t}] &= i (\delta_{1t}Q_{- 2}-\delta_{2t}Q_{-1}), &\qquad [R_{12},S^-_{\phantom- t}] & = i(\delta_{1t}S^-_{\phantom- 2}-\delta_{2t}S^-_{\phantom- 1}),\\
		[R_{rs},Q_{-t}] &= 0, &\qquad [R_{rs},S^-_{\phantom- t}] &= 0.
    \end{aligned}
\end{equation}
These become
\begin{equation}
\begin{split}
	& [R_{rs},Q] = [R_{rs},Q^\dagger] = 0,\quad r,s\geq 3, \\
	& [R_{12},Q] = Q, \quad [R_{12},Q^\dagger] = -Q^\dagger
\end{split}
\end{equation}
when we again use \eqref{eq:sl2-u1-gen-centralizer} and $Q, Q^\dagger = Q_{-1} \pm i Q_{-2}$.

It is also easy to see that all commutators mixing $Q$, $Q^\dagger$ with $Q_{+r}$ and $S^{-}_{\phantom-r}$ are all zero. The last non-trivial commutator is exactly $\{Q,Q^\dagger\}$, which in terms of $C$ and $R_{12}$ is
\begin{equation}
    \{Q,Q^\dagger\} = 4C - 4R_{12}
\end{equation}
We have thus found that the bosonic part of ${\cal C}(\{Q,Q^\dagger\})$ is $\mathfrak{sl}(2)\oplus \mathfrak{u}(1)_C\oplus\mathfrak{u}(1)_R\oplus\mathfrak{so}(4)_R$.
%where the notation reminds us that $\mathfrak{u}(1)_C$ is the $\mathfrak{u}(1)$ that comes from the $C$ generator inherited from the conformal algebra $\mathfrak{so}(2,3)$, whereas $\mathfrak{u}(1)_R$ is the $\mathfrak{u}(1)$ that comes from the $R$-symmetry.
The fermionic generators $Q_{+r}$ and $S^+_{\phantom+r}$ with $r\geq 3$ have zero $\mathfrak{u}(1)_C$ charge and, together with the $\mathfrak{sl}(2)$ generators, they form a one-dimensional superconformal algebra with ${\cal N}=4$. So we group $\{{\cal H},P,K,Q_{+r},S^+_{\phantom+r},R_{rs} : r,s\geq 3\}$ together in this superconformal algebra $\mathfrak{osp}(4|2)$. The remaining generators are $\{Q,Q^\dagger,C, R_{12}\}$ and together they form a $\mathfrak{u}(1|1)$ algebra. Altogether,
\begin{equation}
    {\cal C}(\{Q,Q^\dagger\})\simeq \mathfrak{osp}(4|2)\oplus \mathfrak{u}(1|1).
\end{equation}

\section{The ``BPS - Cohomology'' Isomorphism}\label{app:bps-cohomology-isomorphism}

In this appendix, we are going to prove one important fact, namely that ${\cal H}_N^{\rm BPS}$ is isomorphic to a direct sum of several cohomology spaces associated to one specific co-chain complex. In what follows, all that we assume is a Hilbert space ${\cal H}$ with a nilpotent operator $Q\in \operatorname{End}({\cal H})$ and $\Delta = \{Q,Q^\dagger\}$.\footnote{Note that in this appendix we have rescaled $\Delta$ by an irrelevant factor of $2$ compared with the definition used in the main text.} Note that for a fully rigorous mathematical proof one would have to be careful about the domains of the various operators involved, as it is common in functional analysis. We will sidestep this issue, thereby providing a proof at the physicist level of rigor, manipulating everything as in traditional quantum mechanics.

First, we show some elementary properties of the operator $\Delta$:
\begin{prop}
    The operator $\Delta = \{Q,Q^\dagger\}$ is self-adjoint, positive and commutes with $Q$ and $Q^\dagger$.
\end{prop}
\begin{proof}
    The property $\Delta^\dagger = \Delta$ is manifest.
\begin{comment}
    The adjoint can be computed straightforwardly
    \begin{equation}
        \begin{split}
            \Delta^\dagger &= (QQ^\dagger + Q^\dagger Q)^\dagger\\
            &= (Q^\dagger)^\dagger Q^\dagger + Q^\dagger (Q^\dagger)^\dagger\\
            &= Q Q^\dagger + Q^\dagger Q\\
            &= \Delta.
        \end{split}
    \end{equation}
\end{comment}
    The commutator also follows from a simple calculation using nilpotency
    \begin{equation}
        \begin{split}
            [\Delta,Q] &= (QQ^\dagger +Q^\dagger Q)Q - Q(QQ^\dagger + Q^\dagger Q) = 0
        \end{split}
    \end{equation}
    with the same calculation showing that $[\Delta,Q^\dagger]=0$. To show positivity take $\psi \in {\cal H}$, then
    \begin{equation}
        \begin{split}
            \langle \psi,\Delta\psi\rangle &= \langle \psi,QQ^\dagger \psi \rangle + \langle \psi,Q^\dagger Q\psi\rangle\\
            &= \langle Q^\dagger \psi,Q^\dagger \psi\rangle + \langle Q\psi,Q\psi\rangle\\
            &= \|Q\psi\|^2 + \|Q^\dagger \psi\|^2\\
            &\geq 0,
        \end{split}
    \end{equation}
    which establishes positivity.
\end{proof}
\begin{prop}\label{prop:ker-delta-ker-q}
    Let $\psi\in \ker\Delta$, then $Q\psi=0$ and $Q^\dagger\psi =0$.
\end{prop}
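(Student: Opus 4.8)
The plan is to reuse the norm identity established in the proof of the previous proposition. Recall that for any $\psi \in {\cal H}$ one has
\begin{equation}
    \langle \psi, \Delta \psi \rangle = \|Q\psi\|^2 + \|Q^\dagger \psi\|^2.
\end{equation}
First I would note that $\psi \in \ker \Delta$ means $\Delta \psi = 0$, so in particular $\langle \psi, \Delta \psi \rangle = 0$. Substituting into the identity above gives $\|Q\psi\|^2 + \|Q^\dagger \psi\|^2 = 0$.

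The final step is to observe that a sum of two non-negative real numbers vanishes only if each term vanishes, so $\|Q\psi\|^2 = 0$ and $\|Q^\dagger \psi\|^2 = 0$, and hence $Q\psi = 0$ and $Q^\dagger \psi = 0$ by positive-definiteness of the norm. There is no real obstacle here: the only subtlety, which we have already agreed to sidestep at the physicist's level of rigor, is the usual functional-analytic bookkeeping about whether $\psi$ lies in the domains of $Q$ and $Q^\dagger$ so that the manipulations make sense; assuming $\psi$ is in the relevant domain (which is automatic in the finite-dimensional graded pieces ${\cal H}_N^{(n)}$ we actually care about), the argument is complete. One may also remark that the converse inclusion is immediate — if $Q\psi = Q^\dagger\psi = 0$ then $\Delta\psi = QQ^\dagger\psi + Q^\dagger Q\psi = 0$ — so in fact $\ker\Delta = \ker Q \cap \ker Q^\dagger$, which is the statement that will be used when identifying BPS representatives of cohomology classes.
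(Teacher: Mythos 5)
Your proposal is correct and follows exactly the paper's own argument: apply the identity $\langle \psi,\Delta\psi\rangle = \|Q\psi\|^2 + \|Q^\dagger \psi\|^2$ from the previous proposition and conclude that both non-negative terms must vanish. The additional remarks about domains and the converse inclusion are fine but not needed.
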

\begin{proof}
    Since $\Delta\psi =0$, we have $\langle \psi,\Delta\psi\rangle=0$, and hence by the last proposition
    \begin{equation}
        \|Q\psi\|^2 + \|Q^\dagger \psi\|^2 = 0.
    \end{equation}
    As such, we must have both $Q\psi=0$ and $Q^\dagger\psi=0$.
\end{proof}
\begin{prop}\label{prop:hodge-decomposition}
    The subspaces $\operatorname{im}Q$ and $\operatorname{im}Q^\dagger$ are orthogonal in ${\cal H}$. Moreover, if $\Delta = \{Q,Q^\dagger\}$, $\ker \Delta = (\operatorname{im}Q\oplus \operatorname{im}Q^\dagger)^\perp$ implying that the Hilbert space decomposes as ${\cal H} = \operatorname{im}Q\oplus\operatorname{im}Q^\dagger \oplus \ker \Delta$.
\end{prop}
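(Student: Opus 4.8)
The plan is to establish the three assertions in order: orthogonality of $\operatorname{im}Q$ and $\operatorname{im}Q^\dagger$, then the identity $\ker\Delta = (\operatorname{im}Q\oplus\operatorname{im}Q^\dagger)^\perp$, and finally the direct sum decomposition, which follows by substituting the second fact into the standard Hilbert space splitting $\mathcal{H} = W \oplus W^\perp$. For the orthogonality, I would take arbitrary elements $Q\alpha\in\operatorname{im}Q$ and $Q^\dagger\beta\in\operatorname{im}Q^\dagger$ and use the adjoint relation twice to get $\langle Q\alpha, Q^\dagger\beta\rangle = \langle \alpha, (Q^\dagger)^2\beta\rangle$, which vanishes since nilpotency of $Q$ forces $(Q^\dagger)^2 = (Q^2)^\dagger = 0$.

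For the inclusion $\ker\Delta \subseteq (\operatorname{im}Q\oplus\operatorname{im}Q^\dagger)^\perp$, I would invoke Proposition \ref{prop:ker-delta-ker-q} to conclude that $\psi\in\ker\Delta$ satisfies $Q\psi = 0$ and $Q^\dagger\psi = 0$; then for any $\alpha,\beta$ the adjoint relations give $\langle\psi, Q\alpha\rangle = \langle Q^\dagger\psi,\alpha\rangle = 0$ and $\langle\psi, Q^\dagger\beta\rangle = \langle Q\psi,\beta\rangle = 0$, so $\psi$ is orthogonal to both images. For the reverse inclusion, I would take $\psi$ orthogonal to $\operatorname{im}Q\oplus\operatorname{im}Q^\dagger$ and note that $QQ^\dagger\psi\in\operatorname{im}Q$ and $Q^\dagger Q\psi\in\operatorname{im}Q^\dagger$, hence $\langle\psi,\Delta\psi\rangle = \langle\psi, QQ^\dagger\psi\rangle + \langle\psi, Q^\dagger Q\psi\rangle = 0$. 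Reusing the positivity computation from the preceding proposition, $0 = \langle\psi,\Delta\psi\rangle = \|Q\psi\|^2 + \|Q^\dagger\psi\|^2$ forces $Q\psi = Q^\dagger\psi = 0$, so $\Delta\psi = 0$.

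With $W \equiv \operatorname{im}Q\oplus\operatorname{im}Q^\dagger$ now identified as a subspace whose orthogonal complement is $\ker\Delta$, the decomposition $\mathcal{H} = W\oplus W^\perp$ yields $\mathcal{H} = \operatorname{im}Q\oplus\operatorname{im}Q^\dagger\oplus\ker\Delta$ directly, the three summands being mutually orthogonal. The only genuine subtlety is functional-analytic: $\operatorname{im}Q$ need not be a closed subspace in general, so $W\oplus W^\perp = \mathcal{H}$ requires the kind of care the appendix has already disclaimed; in the graded situation of interest each $\mathcal{H}_N^{(n)}$ may be taken finite-dimensional, where the issue disappears entirely. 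I therefore expect no real obstacle beyond bookkeeping of adjoints.
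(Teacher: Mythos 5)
Your proposal is correct and follows essentially the same route as the paper: orthogonality via nilpotency and the adjoint, the inclusion $\ker\Delta\subseteq(\operatorname{im}Q\oplus\operatorname{im}Q^\dagger)^\perp$ via Proposition \ref{prop:ker-delta-ker-q}, and the reverse inclusion by showing $Q\psi=Q^\dagger\psi=0$ (the paper concludes this directly from the adjoint relations rather than passing through $\langle\psi,\Delta\psi\rangle=0$, but this is a cosmetic difference). Your remark about $\operatorname{im}Q$ possibly failing to be closed is a fair caveat that the paper itself sidesteps by working at the physicist level of rigor.
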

\begin{proof}
    Take arbitrary $\psi,\phi\in {\cal H}$. Then
    \begin{equation}
        \langle Q\psi,Q^\dagger \phi\rangle = \langle Q^2\psi,\phi\rangle = 0
    \end{equation}
    by using nilpotency $Q^2=0$. This shows orthogonality of $\operatorname{im}Q$ and $\operatorname{im}Q^\dagger$. Now define ${\cal H}_0 \equiv \operatorname{im}Q\oplus \operatorname{im}Q^\dagger$. We show that $\ker \Delta \subset {\cal H}_0^\perp$. Take $\psi \in \ker \Delta$ and $\chi \in {\cal H}$, then
    \begin{equation}
        \langle \psi,Q\chi\rangle = \langle Q^\dagger \psi,\chi\rangle = 0
    \end{equation}
    because $\psi\in \ker\Delta$ implies $Q^\dagger\psi=0$. This shows that $\psi\in (\operatorname{im}Q)^\perp$ and the same argument applied to $\langle\psi,Q^\dagger\chi\rangle$ shows that $\psi\in (\operatorname{im}Q^\dagger)^\perp$. As such $\psi \in {\cal H}_0^\perp$ and hence $\ker\Delta \subset {\cal H}_0^\perp$. For the other direction, let $\psi \in {\cal H}_0^\perp$, then
    \begin{equation}
        \langle \psi,Q\chi\rangle = \langle\psi,Q^\dagger \chi\rangle = 0
    \end{equation}
    for all $\chi$. This means $\langle Q\psi,\chi\rangle = 0$ and $\langle Q^\dagger \psi,\chi\rangle=0$ for all $\chi$, and hence $Q\psi=0$ and $Q^\dagger\psi=0$. This implies that  $\psi \in \ker \Delta$ and hence ${\cal H}_0^\perp \subset \ker \Delta$. As a result, we showed $\ker \Delta = (\operatorname{im}Q\oplus\operatorname{im}Q^\dagger)^\perp$.
\end{proof}
\begin{comment}
\begin{prop}\label{prop:hodge-decomposition}
    The Hilbert space decomposes as ${\cal H} = \operatorname{im}Q\oplus\operatorname{im}Q^\dagger \oplus \ker \Delta$
\end{prop}
\begin{proof}
    Let ${\cal H}_0 = \operatorname{im}Q\oplus\operatorname{im}Q^\dagger$. For any subspace it holds ${\cal H} = {\cal H}_0\oplus{\cal H}_0^\perp$ and by the last proposition it turns out we find
    \begin{equation}
        {\cal H} = \operatorname{im}Q \oplus \operatorname{im}Q^\dagger \oplus \ker \Delta.
    \end{equation}
\end{proof}
\end{comment}
Next we are going to show that $\ker \Delta$ can be constructed from the $Q$-cohomology. In general, to define cohomology we need a complex. What we are considering as cohomology at this stage, however, is just the quotient
\begin{equation}
    H_Q({\cal H}) \equiv \dfrac{\ker Q}{\operatorname{im} Q},
\end{equation}
which is the only cohomology space of the trivial complex
\begin{center}
    \begin{tikzcd}
        \cdots \arrow[r] &0 \arrow[r]& {\cal H}\arrow[r, "Q"] & {\cal H} \arrow[r] & 0 \arrow[r]& \cdots
    \end{tikzcd}
\end{center}
The structure of a non-trivial complex appears later under the assumption of extra structure. In this scenario we have the following theorem.
\begin{thm}
    Let ${\cal H}$ be a Hilbert space and $Q\in \operatorname{End}({\cal H})$ a nilpotent operator. Define $\Delta = \{Q,Q^\dagger\}$ and
    \begin{equation}
        H_Q({\cal H}) \equiv \dfrac{\ker Q}{\operatorname{im} Q}.
    \end{equation}
    In these conditions there is an isomorphism $\ker \Delta \simeq H_Q({\cal H})$.
\end{thm}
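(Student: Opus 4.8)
The plan is to use the Hodge-type decomposition of Proposition~\ref{prop:hodge-decomposition} to show that $\ker Q$ splits as $\operatorname{im}Q \oplus \ker\Delta$, after which the isomorphism follows immediately by passing to the quotient by $\operatorname{im}Q$. In other words, the content of the theorem is that every $Q$-cohomology class contains exactly one $\Delta$-harmonic representative.

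First I would record the two easy inclusions. By Proposition~\ref{prop:ker-delta-ker-q}, $\ker\Delta \subseteq \ker Q$ (if $\Delta\psi = 0$ then $Q\psi = 0$), and by nilpotency $\operatorname{im}Q \subseteq \ker Q$. Since Proposition~\ref{prop:hodge-decomposition} gives $\ker\Delta = (\operatorname{im}Q\oplus\operatorname{im}Q^\dagger)^\perp$, in particular $\ker\Delta \perp \operatorname{im}Q$, so the sum $\operatorname{im}Q + \ker\Delta$ is direct (and orthogonal) and is contained in $\ker Q$. The key step is the reverse inclusion $\ker Q \subseteq \operatorname{im}Q \oplus \ker\Delta$: given $\psi\in\ker Q$, use the decomposition ${\cal H} = \operatorname{im}Q\oplus\operatorname{im}Q^\dagger\oplus\ker\Delta$ to write $\psi = Q\alpha + Q^\dagger\beta + \gamma$ with $\gamma\in\ker\Delta$. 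Applying $Q$ and using $Q^2=0$ together with $Q\gamma=0$ (since $\gamma\in\ker\Delta$), one gets $QQ^\dagger\beta = 0$; pairing with $\beta$ gives $\|Q^\dagger\beta\|^2 = 0$, hence $Q^\dagger\beta = 0$ and $\psi = Q\alpha + \gamma \in \operatorname{im}Q\oplus\ker\Delta$. Therefore $\ker Q = \operatorname{im}Q \oplus \ker\Delta$.

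Finally I would conclude by considering the composite
\[
\ker\Delta \hookrightarrow \ker Q \twoheadrightarrow \ker Q/\operatorname{im}Q = H_Q({\cal H}).
\]
This map is surjective because the splitting $\ker Q = \operatorname{im}Q\oplus\ker\Delta$ guarantees that every class has a representative in $\ker\Delta$, and it is injective because $\operatorname{im}Q\cap\ker\Delta \subseteq \operatorname{im}Q\cap(\operatorname{im}Q)^\perp = 0$. Hence it is an isomorphism $\ker\Delta \simeq H_Q({\cal H})$; equivalently, orthogonal projection onto $\ker\Delta$ provides the inverse.

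I do not expect a genuine obstacle here: this is the standard harmonic-representative argument, and all the analytic subtleties (closedness of $\operatorname{im}Q$ and $\operatorname{im}Q^\dagger$, existence of the orthogonal complement) have already been absorbed into Proposition~\ref{prop:hodge-decomposition} at the stated physicist's level of rigor. The only point needing a little care is bookkeeping in the decomposition of $\psi$ — one must check that it is the $\operatorname{im}Q^\dagger$ component that is forced to vanish, not the $\operatorname{im}Q$ component, which is exactly what the computation $QQ^\dagger\beta = 0 \Rightarrow Q^\dagger\beta = 0$ accomplishes.
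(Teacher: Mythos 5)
Your proposal is correct and follows essentially the same route as the paper: both rely on the decomposition ${\cal H} = \operatorname{im}Q\oplus\operatorname{im}Q^\dagger\oplus\ker\Delta$ and on the map $\psi\mapsto[\psi]$, with the identical key computation $QQ^\dagger\beta=0\Rightarrow\|Q^\dagger\beta\|^2=0\Rightarrow Q^\dagger\beta=0$ establishing that every class has a harmonic representative. Your packaging of the result as the splitting $\ker Q=\operatorname{im}Q\oplus\ker\Delta$, with injectivity read off directly from $\operatorname{im}Q\cap(\operatorname{im}Q)^\perp=0$, is a marginally tidier presentation of the same argument.
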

\begin{proof}
    Define the natural projection map $f: \ker\Delta \to H_Q({\cal H})$ by
    \begin{equation}
        f(\psi) = [\psi].
    \end{equation}
    We show that $f$ is an isomorphism. We first show injectivity. Assume that $f(\psi)=0$. Then $[\psi]=0$ and therefore $\psi = Q\chi$ for some $\chi \in {\cal H}$. But $\psi\in \ker \Delta$ and by Proposition \ref{prop:ker-delta-ker-q} this implies $Q^\dagger\psi=0$. This means $Q^\dagger Q\chi=0$ and therefore
    \begin{equation}
        \langle \chi,Q^\dagger Q\chi\rangle = 0\Longrightarrow \langle Q\chi,Q\chi\rangle =0.
    \end{equation}
    As such $\|Q\chi\|^2 =0$ and therefore $Q\chi=0$, but then $\psi=0$ and $f$ is injective.

    To show surjectivity, let $[\psi]$ be an arbitrary class, we must show that there is a class representative $\chi$ with $\chi\in \ker\Delta$. To do that, use the decomposition in Proposition \ref{prop:hodge-decomposition}:
    \begin{equation}
        \psi = Q\alpha + Q^\dagger \beta + \chi,
    \end{equation}
    where $\chi \in \ker\Delta$. Since $[\psi]\in H_Q({\cal H})$ we must have $Q\psi=0$. As such, acting with $Q$
    \begin{equation}
       \begin{split}
            Q\psi &= Q^2\alpha + QQ^\dagger \beta + Q\chi = 0
       \end{split}
    \end{equation}
    But now $Q^2\alpha=0$ and $Q\chi=0$ because $\chi \in \ker\Delta$ by Proposition \ref{prop:ker-delta-ker-q}. As such $QQ^\dagger\beta=0$, but then
    \begin{equation}
        \langle \beta,QQ^\dagger\beta\rangle =0\Longrightarrow \langle Q^\dagger\beta,Q^\dagger\beta\rangle =0
    \end{equation}
    which implies $\|Q^\dagger\beta\|^2=0$ and hence $Q^\dagger\beta=0$. As a result
    \begin{equation}
        \psi = \chi + Q\alpha
    \end{equation}
    where $\chi \in \ker\Delta$. Since $\psi$ and $\chi$ differ by $Q\alpha$ it follows $[\psi]=[\chi]$ and hence
    \begin{equation}
        f(\chi+Q\alpha)= [\psi],
    \end{equation}
    which shows that $f$ is surjective. we have thus established $f$ as an isomorphism.
\end{proof}
Note that this theorem tells us that there is one and only one BPS representative of each $Q$-cohomology class. Finally we explain how the structure of a non-trivial complex in \cite{cl24} appears. To that end, we let $E\in \operatorname{End}({\cal H})$ be one bounded hermitian operator, with discrete spectrum $\sigma(E) = \{E_n : n\in \mathbb{Z}\}$ that takes the form $E_n = E_0 + n$, and such that $[E,Q] = Q$. We decompose ${\cal H}$ into the eigenspaces of $E$
\begin{equation}
    {\cal H} = \bigoplus_{n\in \mathbb{Z}} {\cal H}^{(n)}
\end{equation}
where $E\psi = E_n \psi$ for $\psi \in {\cal H}^{(n)}$. The condition $[E,Q]=Q$ implies that if $\psi \in {\cal H}^{(n)}$ then $Q\psi\in {\cal H}^{(n+1)}$. This means that ${\cal H}^\bullet = ({\cal H}^{(n)},Q)$ forms a complex with cohomology groups $H_Q^{(n)}({\cal H}^\bullet)$ and it is clear that their direct sum $H_Q^\ast({\cal H}^\bullet)$ is isomorphic to $H_Q({\cal H})$.

 \bibliographystyle{utphys}
 \bibliography{references}

\end{document}